\pdfoutput=1
\documentclass[11pt,oneside,a4paper]{article}
\usepackage[T1]{fontenc}
\usepackage[utf8]{inputenc}
\usepackage[english]{babel}
\usepackage{fullpage}
\usepackage{amsmath,amssymb}
\usepackage{amsthm}
\usepackage{mathtools}
\usepackage{xcolor}
\usepackage{enumitem}
\usepackage{natbib}
\usepackage{booktabs}
\usepackage{graphicx}
\usepackage{tikz}
\usepackage{pgfplots}
\pgfplotsset{compat=1.18}

\newcommand{\cF}{\mathcal{F}}

\newcommand{\E}{\mathbb{E}}

\newcommand{\I}{\mathbb{I}}

\newcommand{\K}{\mathbb{K}}

\newcommand{\Pb}{\mathbb{P}}
\newcommand{\Q}{\mathbb{Q}}
\newcommand{\R}{\mathbb{R}}

 \newcommand{\e}{\varepsilon}

\newcommand{\lrb}[1]{\left(#1\right)}
\newcommand{\brb}[1]{\bigl(#1\bigr)}

\newcommand{\lsb}[1]{\left[#1\right]}
\newcommand{\bsb}[1]{\bigl[#1\bigr]}

\newcommand{\lcb}[1]{\left\{#1\right\}}
\newcommand{\bcb}[1]{\bigl\{#1\bigr\}}

\newcommand{\labs}[1]{\left\lvert#1\right\rvert}
\newcommand{\babs}[1]{\bigl\lvert#1\bigr\rvert}

\newcommand{\lno}[1]{\left\lVert#1\right\rVert}

\newcommand{\lan}[1]{\left\langle#1\right\rangle}

\DeclareMathOperator*{\argmax}{argmax}
\newcommand{\diff}{\,\mathrm{d}}
\newcommand{\dif}{\mathrm{d}}

\newcommand{\fracc}[2]{#1/#2}

\theoremstyle{plain}
\newtheorem{theorem}{Theorem}
\newtheorem{lemma}[theorem]{Lemma}
\newtheorem{proposition}[theorem]{Proposition}
\newtheorem{corollary}[theorem]{Corollary}
\theoremstyle{definition}
\newtheorem{definition}[theorem]{Definition}
\theoremstyle{remark}
\newtheorem{remark}[theorem]{Remark}
\newcommand{\norm}[1]{\left\lVert #1 \right\rVert}

\newcommand{\simplex}[1]{\Delta^{#1}}
\DeclareMathOperator*{\relint}{relint}
\newcommand{\KL}{D_{\mathrm{KL}}}
\newcommand{\TV}[2]{\lno{ #1 - #2 }_{\textrm{TV}} }
\newcommand{\tsphere}[1]{\mathrm{B}^{#1}}
\newcommand{\mycase}[1]{\underline{\textbf{Case #1:}}}
\usepackage{hyperref}
\usepackage[capitalize,noabbrev]{cleveref}
\renewcommand{\thefootnote}{\fnsymbol{footnote}}
\title{Generalized Pinsker Inequality \\for Bregman Divergences of Negative Tsallis Entropies}
\author{
    Guglielmo Beretta\textsuperscript{\rm (1),(2)},
    Tommaso Cesari\textsuperscript{\rm (3)},
    Roberto Colomboni\textsuperscript{\rm (4),(5)}
    \\\\
\textnormal{\textsuperscript{\rm (1)}DAIS, Universit\`{a} Ca' Foscari Venezia, Italy}\\
\textnormal{\textsuperscript{\rm (2)}DAUIN, Politecnico di Torino, Italy}\\
\textnormal{\textsuperscript{\rm (3)}School of Electrical Engineering and Computer Science, University of Ottawa, Canada}\\
\textnormal{\textsuperscript{\rm (4)}DEIB, Politecnico di Milano, Italy}\\
\textnormal{\textsuperscript{\rm (5)}Department of Computer Science, Università degli Studi di Milano, Italy}\\\\
\textnormal{\texttt{\{guglielmo.beretta\}@unive.it}}\\
\textnormal{\texttt{\{tommaso.cesari\}@uottawa.ca}}\\
\textnormal{\texttt{\{roberto.colomboni\}@polimi.it}}\\
}
\begin{document}
\maketitle
\begin{abstract}
The Pinsker inequality lower bounds the Kullback--Leibler divergence $\KL$ in terms of total variation and provides a canonical way to convert $\KL$ control into $\lno{\cdot}_1$-control.
Motivated by applications to probabilistic prediction with Tsallis losses and online learning, we establish a generalized Pinsker inequality for the Bregman divergences $D_\alpha$ generated by the negative $\alpha$-Tsallis entropies---also known as $\beta$-divergences.
Specifically, for any $p$, $q$ in the relative interior of the probability simplex $\simplex{K}$, we prove the sharp bound
\[
    D_\alpha(p\Vert q) \ge \frac{C_{\alpha,K}}{2}\cdot \|p-q\|_1^2 \;,
\]
and we determine the optimal constant $C_{\alpha,K}$ explicitly for every choice of $(\alpha,K)$.
\end{abstract}
\textbf{Keywords:} {Pinsker inequality, information theory, proper scoring rules, proper losses, strong properness, Tsallis entropy, Burg entropy, Shannon entropy,  Bregman divergence, $\beta$-divergence, Itakura-Saito divergence, Kullback-Leibler divergence}
%%%%%%%%%%%% footnote hack - part 2
\renewcommand*{\thefootnote}{\arabic{footnote}}
\setcounter{footnote}{0}
\section{Introduction and Related Works}
Proper losses\footnote{Or equivalently, proper scoring rules, up to sign.} are a canonical framework for \emph{probabilistic prediction}, where 
given a set of possible outcomes $[K] \coloneqq \{1,\dots,K\}$ for some integer $K \ge 2$, the learner predicts a probability distribution $q$ over $[K]$ and then incurs a loss $\ell(q,y)$ when an outcome $y\in[K]$ is realized.
 The loss $\ell$ is called \emph{strictly proper} if, for every (true) distribution $p$, the expected loss $q \mapsto \E_{Y\sim p} \bsb{ \ell(q,Y) }$ is uniquely minimized at $q=p$ \citep{gneiting-raftery2007jasa}, and  
if the loss is also differentiable, then 
 the \emph{excess risk}
$
    \E_{Y\sim p} \bsb{ \ell(q,Y) }-\E_{Y\sim p}\bsb{ \ell(p,Y) }
$
admits a classical divergence representation: it equals the Bregman divergence generated by the convex function $-H$, where $H$ is the associated Bayes risk (or generalized entropy)
$
    H \colon p \mapsto  \E_{Y\sim p} \bsb{ \ell(p,Y)}
$
\citep{reid2009surrogate,abernethy12}.

Since learning guarantees often control excess risk, a common next step is to translate bounds on the excess risk into more interpretable notions of distributional closeness, such as total variation distance \citep{reid2011information,haghtalab2019losses}.
A canonical example is the log loss
$
    \ell_1(q,k)
\coloneqq
    -\ln q_k
$.
Its Bayes risk is the Shannon entropy $S_1$, and its excess risk coincides with the Kullback--Leibler (KL) divergence $\KL$ on the probability simplex $\simplex{K}$.
The sharp conversion from excess risk to total variation is given by the classic Pinsker inequality from information theory \citep{Tsybakov2008}
\[
    \TV{p}{q}
\le
    \sqrt{\tfrac{1}{2}\cdot\KL(p\Vert q)}\;,
\]
where $\lno{\cdot}_{\mathrm{TV}}$ is the total variation,%
    \footnote{As stated by \cite{reid2011information}, one should be careful in comparing notations from different authors. For instance, the term ``variational divergence'' by \cite{reid2011information} denotes the norm $\lno{\cdot}_1$.}
or equivalently (see \Cref{s:TVvsEll1}),
\[
    \KL(p\Vert q)
\ge
    \tfrac{1}{2}\cdot \lno{p-q}_1^2\;.
\]
In many settings, one departs from the log loss and Shannon entropy, and other strictly proper
losses become useful \citep{Savage1971,LambertPennockShoham2008,giummole2019objective}.
Prominent examples are the \emph{Tsallis} (or \emph{power}) losses
\begin{equation}
\label{eq:tsallis_losses}    
    \ell_\alpha(q,k)
\coloneqq
    \frac{q^{\alpha-1}_k}{1-\alpha}+\frac{1}{\alpha}\sum_{i \in [K]}q_i^\alpha\;,
\end{equation}
(with the ``limiting'' $\alpha = 0$ case, 
$
    \ell_0(q,k)
\coloneqq
    q_k^{-1} -K +\sum_{i \in [K]} \ln q_i
$),
which have been extensively studied in the scoring-rule literature, 
with a variety of immaterial distinct additive or multiplicative normalizations \citep{dawid2007geometry,cichocki_FamiAlphBeta_2010,dawid2014theory,dawid2016minimum,BasuEtal1998DensityPower,kanamori2014,Mameli2015}.
 In the applied literature, these losses appear primarily via the \emph{induced discrepancy}, i.e., the expected excess risk
$\E_{Y\sim p}[\ell_\alpha(q,Y)]-\E_{Y\sim p}[\ell_\alpha(p,Y)]$,
which matches (up to reparametrization and constants independent of $q$) objectives known as \emph{density power divergence} in robust statistics
\citep{BasuEtal1998DensityPower} and as \emph{$\beta$-divergence} in signal processing and matrix factorization
\citep{fevotte2011algorithms}; see, e.g., \cite{dawid2016minimum} and \cite{jewson2018generalBayes}
for an explicit bridge between the Tsallis score and $\beta$-/density power divergences.
Concrete applications include robust inference for health-science data via minimum density power divergence
\citep{nandy2022healthDPD}, hyperspectral unmixing in remote sensing \citep{fevotteDobigeon2015Hyperspectral},
dynamic PET factor analysis \citep{cavalcanti2019PETbeta}, and audio source separation leveraging nonnegative matrix factorization 
\citep{ozerovFevotte2010AudioNMF}.

The Bayes risk of $\ell_\alpha$ is the
$\alpha$-Tsallis entropy $S_\alpha$,
while the associated excess risk is the Bregman divergence $D_\alpha(p\Vert q)$ of $-S_\alpha$ (see \Cref{app:loss_entropy_bregman}). 
Consequently, any Pinsker-type inequality
for the Bregman divergence of $-S_\alpha$ would convert any excess-risk bound on $\ell_\alpha$
into the more interpretable notion of $\lno{\cdot}_1$ (or total variation) control on predictive distributions, and in turn would provide excess-risk bounds for the $0$--$1$ loss of the corresponding plug-in rules via the standard intermediate inequality
$
    \Pb_{Y\sim p} \lsb{ Y \neq k_q^\star }
    -
    \Pb_{Y\sim p}\lsb{Y \neq k_p^\star}
\le
    \lno{p-q}_1
$,
where for any $q \in \Delta^K$ the index $k_q^\star$ is selected deterministically in $\argmax_{k\in[K]} q_k$ (see \Cref{app:0-1}).

Tsallis entropies also arise as regularizers in online learning/convex optimization and multi-armed bandit problems, where they induce geometries and data-dependent behavior that interpolate the Shannon entropy ($\alpha=1$) and log-barrier/Burg entropy ($\alpha=0$) regularizations, and where the associated Bregman divergences play a crucial role in the analysis of the corresponding algorithms (see, e.g., \citealt{abernethy2015fighting,ZimmertSeldin2019TsallisINF,ZimmertSeldin2021BestOfBothWorlds,cesa2022nonstochastic,masoudian21a,kulis2010implicit}).
In these settings, a Pinsker-type inequality for the Bregman divergence of $-S_\alpha$ is equivalent to its $\lno{\cdot}_1$-strong convexity, thereby identifying the curvature parameter that enters standard Follow-the-Regularized-Leader and Mirror-Descent regret analyses (see, e.g., \citealt[Theorem~2.11]{shalev2012online}).

We are therefore led to this question:

\begin{quote}\centering\itshape
What is the analogue of the Pinsker inequality\\ for the
Bregman divergences generated by negative $\alpha$-Tsallis entropies?
\end{quote}
We answer this question by establishing a sharp generalization of Pinsker inequality (\Cref{thm:pinsker}).
Denoting by $D_\alpha(\cdot\Vert\cdot)$ the Bregman divergence generated by the negative $\alpha$-Tsallis entropy $-S_\alpha$ (equivalently, the $\beta$-divergence with $\beta = \alpha$), we provide an explicit \emph{sharp} constant $C_{\alpha,K}$ such that, for any $p,q$ in the relative interior%
    \footnote{
    For fixed $p$ in the relative interior of $\simplex{K}$,
    the divergence $D_\alpha(p\Vert q)$ admits a well-defined (possibly infinite) extension
    to any $q$ in $\simplex{K}$ by passing to the limit,
    and so the inequality trivially
    extends to all $q$ in $\simplex{K}$.
    In what follows, for brevity, we state our results only for $p$ and $q$ that both belong to the relative interior of $\simplex{K}$.
    }
 of $\simplex{K}$,
\[
    D_\alpha(p\Vert q)
\ge
    \frac{C_{\alpha,K} }{2} \cdot \lno{p-q}_1^2\;,
\]
\begin{table}[t]
\centering
\caption{Sharp Pinsker inequality constants $C_{\alpha,K}$ in $D_\alpha(p\Vert q)\ge \frac{C_{\alpha,K}}{2} \cdot\lno{p-q}_1^2$ for the Bregman divergence of the negative $\alpha$-Tsallis entropy on the relative interior of $\simplex{K} $.}
\label{tab:CalphaK-regimes}
\begin{tabular}{@{}lll@{}}
\toprule
Regime & Constant $C_{\alpha,K}$ & Notes \\ \midrule
$\alpha \in (-\infty,1]$
& $\displaystyle 2^{1-\alpha}$
& Dimension-free constant \\[6pt]

$\alpha\in(1,2]$, $K$ even
& $\displaystyle K^{1-\alpha}$
& Polynomial dependence on $K$ \\[6pt]

$\alpha\in(1,2]$, $K$ odd
& $\displaystyle K^{1-\alpha}\cdot \sigma_{\alpha, K}$
& $1 \le \sigma_{\alpha, K} \le 1+
        \frac{7(\alpha-1)}{6(3-\alpha)}\frac{1}{K^2} \le 1 + \frac{7}{6} 
        \frac{1}{K^2} $ 
        \\[6pt]

$\alpha>2$, $K=2$
& $\displaystyle 2^{1-\max\lcb{\alpha, 3}} $
& \text{$\frac{1}{4}$ for $2<\alpha\le 3$, with a phase change at $\alpha = 3$} \\[6pt]

$\alpha>2$, $K\ge 3$
& $\displaystyle 0$
& No Pinsker: $\inf_{p\neq q} D_\alpha(p\|q)/\|p-q\|_1^2=0$ \\ \bottomrule
\end{tabular}
\end{table}
To do so, we derive a variational characterization of $C_{\alpha,K}$ in terms of the Hessian $H S_\alpha$ of $S_\alpha$, which reduces the sharp-constant computation to optimizing a parametric quadratic form over tangent $\lno{\cdot}_1$-unit directions.
As a special case, since the Bregman divergence $D_1$ of the negative Shannon entropy $-S_1$ coincides with the Kullback-Leibler divergence $\KL$, our variational framework recovers Pinsker inequality with the optimal constant for $\alpha = 1$.

The constant $C_{\alpha,K}$ exhibits several qualitatively different regimes, reflecting a
change in the geometry of the Bregman divergence of $-S_\alpha$ as $(\alpha,K)$ varies (see \Cref{tab:CalphaK-regimes}, Figure~\ref{fig:CalphaK-regimes}, and the discussion at the beginning of \cref{sec:proof_Pinsker_new}).
For $\alpha \in (-\infty,1]$, the bound is \emph{dimension-free}: $C_{\alpha,K}=2^{1-\alpha}$ does not depend on $K$,
so Tsallis excess risk controls the $\lno{\cdot}_1$ metric uniformly over the simplex with a constant that depends only on $\alpha$.
In contrast, for $\alpha\in(1,2]$, the constant exhibits power-law decay in the dimension, as 
$C_{\alpha,K} \approx K^{1-\alpha}$, so the excess-risk-to-$\lno{\cdot}_1$ conversion degrades with $K$.
Also, in this case, there is a peculiar parity effect: when $K$ is even, we have the exact identity
$C_{\alpha,K}=K^{1-\alpha}$, whereas for odd $K$ the sharp constant is
$C_{\alpha,K}=K^{1-\alpha}\cdot \sigma_{\alpha, K}$ with $1 \le \sigma_{\alpha, K} = 1 + O(1/K^2)$.
Thus, the even and odd cases agree to the leading order, but the odd case incurs a small second-order correction
that vanishes as $K\to\infty$.
Finally, for $\alpha>2$ the picture changes abruptly.
For multiclass problems, i.e., for $K\ge 3$, there is \emph{no} uniform Pinsker-type inequality of the form
$D_\alpha(p\Vert q)\ge c \lno{p-q}_1^2$ with $c>0$, and additional constraints on $p$ and/or $q$ are necessary to produce a similar inequality (see \Cref{app:clipping}).
By contrast, in the binary case, i.e., for $K=2$, which is arguably the most important special case, a Pinsker-type inequality \emph{does} hold for all $\alpha>2$, so the excess-risk-to-$\lno{\cdot}_1$ conversion remains available (e.g., for binary classification) even beyond the threshold $\alpha=2$.

\paragraph{Other Generalized Pinsker Inequalities.}
Pinsker-type inequalities have been studied for Csisz\'ar $f$-divergences and hence for Tsallis \emph{relative entropies} (which are $f$-divergences) \citep{ReidWilliamson2009GeneralisedPinsker,Gilardoni,rioul2023historical}.
We remark that these results are \emph{orthogonal} to ours: for $\alpha\neq 1$, the Bregman divergences generated by negative $\alpha$-Tsallis entropies are \emph{not} Csisz\'ar $f$-divergences, and they \emph{only} coincide with an $f$-divergence in the Shannon limit $\alpha\to 1$, where the induced Bregman divergence reduces to the Kullback-Leibler divergence $\KL$.
Indeed, the Bregman divergences of negative $\alpha$-Tsallis entropies are decomposable, and on the probability simplex the \emph{only} Csisz\'ar $f$-divergences that are also decomposable Bregman divergences are (up to positive multiplicative constants) $\KL$ and its reverse $(p,q)\mapsto\KL(q\Vert p)$ \citep{Amari_KL_Bregman_f_divergence}.

For background on Csisz\'ar $f$-divergences and $\alpha$-Tsallis relative entropies, as well as a discussion of the relationship between the generalized Pinsker inequality for $\alpha$-Tsallis relative entropies and 
our results, see \Cref{sec:f-divergences}.

\section{Pinsker Inequality for Bregman Divergences of Tsallis Entropies}
For the remainder of this work, we assume that $K$ is any integer greater than or equal to 2.
For any $\alpha > 0$ and any $p \in \R^K$, we denote by $\lno{p}_\alpha \coloneqq \brb{\sum_{k \in [K]} \labs{p_k}^\alpha}^{1/\alpha}$ the usual Minkowski's (quasi-) norm of the vector $p$.
We denote by 
$
    \simplex{K} 
\coloneqq 
    \bcb{ p\in [0,1]^K : \lno{p}_1=1 }
$
 the $(K-1)$-dimensional probability simplex
and its relative interior by
$
    \relint\lrb{\simplex{K}}
\coloneqq 
    \simplex{K} \cap(0,1)^K
\;.
$

We begin by recalling the entropies that appear as Bayes risks for the family of power/Tsallis losses; their negatives serve as convex generators of the Bregman divergences we study.
\begin{definition}[Tsallis entropies]\label{def:tsallis}
For any $\alpha \in \R$, define $S_\alpha \colon [0,+\infty)^K \to [-\infty,+\infty)$ by
\begin{equation}
\label{eq:Tsallis}
    S_\alpha(p) 
\coloneqq
    \begin{cases}
    \vspace{1ex}
        \displaystyle\frac{\sum_{k\in [K]}p_k^\alpha}{\alpha(1-\alpha)}  \;,\qquad &\text{ if $\alpha \notin\lcb{0,1}$}\;,\\
    \vspace{1ex}
    \displaystyle\sum_{k \in [K]} \ln p_k \;,\qquad  &\text{ if $\alpha = 0$}\;,\\
    \displaystyle- \sum_{k \in [K]} p_k \ln p_k \;,\qquad  &\text{ if $\alpha = 1$}\;,
\end{cases}    
\end{equation}
with the understanding that $0^{\alpha} = + \infty$ for negative $\alpha$, and $\ln(0) = -\infty$ but $0 \cdot \ln(0) = 0$.
For any $p\in \simplex{K}$, we call the quantity $S_{\alpha}(p)$ the $\alpha$-\emph{Tsallis entropy}%
\footnote{
    In the literature, definitions of the $\alpha$-Tsallis entropies may differ by a multiplicative factor and by the addition of an affine function of $p$ (in particular, an additive constant).
    For Bregman divergences, scaling the generator scales the divergence, while adding any affine function leaves it unchanged: for $c>0$, $a\in\mathbb{R}$, and $u\in\mathbb{R}^K$,
    \[
    D_{c\cdot f + a + \langle u,\cdot\rangle}(p, q)=c \cdot D_f(p, q)\;,\qquad \text{for any $p,q \in (0,+\infty)^K$}\;.
    \]
    Accordingly, changing the multiplicative factor in the definition of $S_\alpha$ results in the same multiplicative scaling in the induced divergence---and hence scales the constant $C_{\alpha,K}$ in \Cref{thm:pinsker} by the same factor.
    }
of $p$.
\end{definition}
In the literature, $S_0$ is also known as the Burg (or log) entropy, while $S_1$ is the Shannon entropy.
The reason to include $S_0$ and $S_1$ in the Tsallis family is a continuity argument that will become apparent in Remark~\ref{r:beta-divergences}. A common property of all these entropies is given in the next remark, which justifies all subsequent differential calculations.
\begin{remark}
For any $\alpha \in \R$, note that $-S_\alpha$ is strictly convex and of class $C^\infty$ once restricted to $(0,+\infty)^K$. The strict convexity follows immediately from the Hessian of $-S_\alpha$ being positive definite (see Remark~\ref{remark:hessian}).
\end{remark}
We are now ready to introduce our main objects of study: the Bregman divergences generated by $-S_\alpha$.
From the scores/losses perspective, they represent the excess risk of Tsallis scores/losses whose Bayes risk is $S_\alpha$ (see Appendix~\ref{app:loss_entropy_bregman}).
\begin{definition}[Bregman divergences]\label{def:bregman}
Let $f \colon (0,+\infty)^K \to \R$ be a differentiable function.
The Bregman divergence of $f$ is the function
\[
    D_f \colon (0,+\infty)^K\times(0,+\infty)^K\to\R\;,
\qquad
    (p,q) \mapsto f(p)-f(q)- \langle \nabla f(q),p-q\rangle\;,
\]
where $\nabla f(q)$ is the gradient of $f$ in $q$. In what follows, for any $\alpha \in \R$, and any $p,q \in (0,+\infty)^K$, we denote $D_{-S_\alpha}(p,q)$ with the less cumbersome notation $D_\alpha(p \Vert q)$.
\end{definition}
Readers who are not familiar with Bregman divergences may find the next remark useful.
\begin{remark}\label{r:breg-strict-convexity}
For every $\alpha \in \R$, the strict convexity of $-S_{\alpha}$ on $(0,+\infty)^K$ implies that $D_\alpha(p\Vert q) \geq 0$ for every $p, q \in (0,+\infty)^K$, and that $D_\alpha(p\Vert q) = 0$ if and only if $q = p$.
\end{remark}
From the losses perspective, Remark~\ref{r:breg-strict-convexity}  can be read  as saying that the Tsallis losses are strictly proper: the associated excess-risk divergence $D_\alpha$ is nonnegative and vanishes only at $q=p$.

The case $\alpha = 1$ has an interesting known relationship with the Kullback-Leibler divergence.
\begin{remark}
    \label{r:kullback-leibler}
    For all $p,q \in \relint\lrb{\simplex{K}}$, $D_1(p\Vert q) = \KL(p\Vert q)$. Indeed:
\[
    D_1(p\Vert q)
=
    \sum_{k \in [K]} \lrb{ p_k \cdot\ln\frac{p_k}{q_k} - p_k + q_k}
=
    \sum_{k \in [K]}  p_k \cdot\ln\frac{p_k}{q_k}
=
    \KL(p\Vert q)
    \;.
\]
\end{remark}
Moreover, Definition~\ref{def:bregman} is related to the following one-parameter family of $\beta$-divergences from the statistics literature \citep{hennequin2010beta,fevotte2011algorithms}.
\begin{definition}[$\beta$-divergences]
\label{def:beta}
For any $\beta \in \R$, the $\beta$-divergence of $p,q \in (0,+\infty)^K$ is defined as
\[
    d_\beta(p \Vert q)
\coloneq
    \begin{cases}
    \vspace{1ex}
        \displaystyle    \frac{1}{\beta} \cdot \sum_{k \in [K]} \frac{p_k^\beta + (\beta-1)q_k^\beta-\beta p_kq_k^{\beta-1}}{\beta-1}
    \;,\qquad &\text{ if $\beta \notin\lcb{0,1}$}\;,\\
    \vspace{1ex}
    \displaystyle\sum_{k=1}^K
    \lrb{ \frac{p_k}{q_k} -\ln\frac{p_k}{q_k} -1 } \;,\qquad  &\text{ if $\beta = 0$}\;,\\
    \displaystyle \sum_{k \in [K]} \lrb{ p_k \cdot\ln\frac{p_k}{q_k} - p_k + q_k} \;,\qquad  &\text{ if $\beta = 1$}\;,
\end{cases}    
\]
\end{definition}
The divergence $d_0$ is also known as the Itakura-Saito divergence, whereas $d_1$ as the I-divergence.
The link between Definition~\ref{def:bregman} and Definition~\ref{def:beta} is explained in the next remark, which was already noted by \cite{hennequin2010beta}.
\begin{remark}
    \label{r:beta-divergences}
For any $\alpha \in \R$, the Bregman divergence $D_\alpha$ of $-S_\alpha$ coincides with the \emph{$\beta$-divergence} $d_\beta$, with $\beta = \alpha$, i.e., for every $p,q \in (0,+\infty)^K$, and $\alpha \in \R$
\[
    D_\alpha(p\Vert q) = d_\alpha(p\Vert q)\;.
\]
Furthermore, for every $p,q \in (0,+\infty)^K$ and every $\alpha \in \R$, it holds that
\[
    D_\alpha(p \Vert q) = \lim_{\alpha' \to \alpha} D_{\alpha'}(p \Vert q)\;.
\]
\end{remark}

We now have all the ingredients to state our main result.
Our goal is to establish a Pinsker-type lower bound of $D_\alpha(p\Vert q)$ in terms of $\|p-q\|_1^2$ on $\relint\lrb{\simplex{K}}$.
The next theorem identifies the sharp constants \(C_{\alpha,K}\) and gives its explicit expression in all regimes.
\begin{theorem}[Pinsker Inequality for Bregman divergences of Tsallis Entropies]
\label{thm:pinsker}
Let $\alpha \in \R$ and denote by $C_{\alpha, K}$ the largest $C \geq 0$ such that for every $p$, $q \in  \relint\lrb{\simplex{K}}$
\begin{equation}    
    D_\alpha\brb{p\Vert q}
\ge
    \dfrac{C}{2} \cdot\lno{p - q}_1^2\;.
\end{equation}    
Then,
\begin{align}
    C_{\alpha,K} = 
    \begin{cases}
        2^{1-\alpha} \qquad &\textit{if $\alpha \in (-\infty,1]$,}
        \\
         K^{1-\alpha} \qquad &\textit{if $\alpha \in (1,2]$ and $K$ is even,}
        \\
        K^{1-\alpha} \cdot \sigma_{\alpha, K}  &\textit{if $\alpha \in (1,2]$ and $K$ is odd,}
        \\
         2^{1-\max\lcb{\alpha, 3}} &\textit{if $\alpha \in (2,+\infty)$ and $K = 2$, }
        \\
        0 &\textit{if $\alpha \in (2,+\infty)$ and $K \ge 3$,} 
    \end{cases}
\end{align}
where $\sigma_{\alpha, K} \coloneq \lrb{  \frac{ (1-\frac{1}{K})^{\frac{1 - \alpha}{3- \alpha}} + (1+\frac{1}{K})^{\frac{1 - \alpha}{3- \alpha}} }{2} }^{3 - \alpha}$ (see Remark~\ref{rem:sigma} for more about this term).
\end{theorem}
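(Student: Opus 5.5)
We reduce the computation of $C_{\alpha,K}$ to a variational problem over the Hessian of $-S_\alpha$. By Taylor's formula with integral remainder,
\[
D_\alpha(p\Vert q)=\int_0^1 (1-t)\,(p-q)^\top H(-S_\alpha)\brb{q+t(p-q)}\,(p-q)\diff t ,
\]
where $H(-S_\alpha)(x)=\mathrm{diag}(x_1^{\alpha-2},\dots,x_K^{\alpha-2})$. Write $v=p-q$, so that $\sum_k v_k=0$, and set $\lambda_{\alpha,K}\coloneqq\inf\{\sum_k x_k^{\alpha-2}v_k^2 : x\in\relint(\simplex{K}),\ \sum_k v_k=0,\ \lno{v}_1=1\}$. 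The identity above immediately gives $C_{\alpha,K}\ge\lambda_{\alpha,K}$.

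Conversely, letting $p\to q$ along a fixed direction shows $C_{\alpha,K}\le\lambda_{\alpha,K}$ whenever the infimum is approached at interior points. This settles every case except the binary regime $\alpha>3$. There the exact constant $2^{1-\alpha}$ must come from a genuinely non-local pair, namely $p,q$ tending to opposite vertices. In that regime we argue directly on the one-dimensional function $D_\alpha((x,1-x)\Vert(y,1-y))/(4(x-y)^2)$.

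The main step is to compute $\lambda_{\alpha,K}$.

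\textbf{Case $\alpha\le 2$.} Here $x_k^{\alpha-2}$ is decreasing in $x_k$. For a fixed $v$, minimize over $x$ in the simplex: by Lagrange multipliers (a convex problem, since $t\mapsto t^{\alpha-2}$ is convex),
\[
\min_x \sum_k x_k^{\alpha-2}v_k^2=\Bigl(\sum_k |v_k|^{2/(3-\alpha)}\Bigr)^{3-\alpha}=\lno{v}_{r}^2,\qquad r=\tfrac{2}{3-\alpha}.
\]
For $\alpha=0$ this is $\lno{v}_{2/3}^2$, and the same formula holds for all $\alpha$. It therefore remains to minimize $\lno{v}_r$ subject to $\lno{v}_1=1$ and $\sum_k v_k=0$. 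Equivalently, split $v$ into positive and negative parts, each of $\ell_1$-mass $1/2$.

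If $r\le 1$ (i.e.\ $\alpha\le1$), $\lno{\cdot}_r$ is a quasi-norm that is minimized by concentrating mass. The minimizer is $v=(1/2,-1/2,0,\dots)$, giving $\lambda=(2\cdot 2^{-r})^{2/r}=2^{1-\alpha}$.

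If $1<r\le2$ (i.e.\ $1<\alpha\le2$), $\lno{\cdot}_r$ is minimized by spreading mass evenly. With $m$ positive and $K-m$ negative coordinates, the value is $\bigl(m(1/2m)^r+(K-m)(1/2(K-m))^r\bigr)^{2/r}$, to be minimized over $m$. For $K$ even, $m=K/2$ gives $K^{1-\alpha}$. For $K$ odd, $m=(K\pm1)/2$; simplifying gives the factor $\sigma_{\alpha,K}$, since $(1\pm 1/K)^{(1-\alpha)/(3-\alpha)}$ matches $m^{1-r}$ after normalization. Showing that balanced splits are optimal is a convexity and symmetrization argument, together with a discrete check over $m$.

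\textbf{Case $\alpha>2$.} The weights $x_k^{\alpha-2}$ now vanish as $x_k\to0$.

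For $K\ge3$, place $v$ on coordinates $2,3$ and let $x_2,x_3\to0$ with $x_1\to1$; then $\lambda=0$. A genuine pair $p,q$ near the face $\{x_1=1\}$ shows that $D_\alpha/\lno{p-q}_1^2$ also tends to $0$, since the local expansion is uniform along a short segment.

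For $K=2$, we have $v=(1/2,-1/2)$, and $\lambda=\tfrac14\inf_x\bigl(x^{\alpha-2}+(1-x)^{\alpha-2}\bigr)$. This infimum is attained at $x=1/2$ and equals $2^{1-\alpha}$, but that is not sharp. For $\alpha\in(2,3]$, the binary claim is the two-variable inequality
\[
D_\alpha\ \ge\ \tfrac{1}{8}\lno{p-q}_1^2 \quad\text{for } \alpha\in(2,3],
\]
with the matching upper bound again obtained at the vertices. The hard part is to verify this one-variable inequality in $(x,y)$ and to identify the phase change at $\alpha=3$. For that, I would first try the convexity of $g(t)=t^\alpha+(1-t)^\alpha$ and compare the Bregman remainder with the chord slope.
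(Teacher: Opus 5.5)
Your framework is the paper's own. The second-order Taylor identity gives $C_{\alpha,K}\ge\lambda_{\alpha,K}$; you use the integral remainder where the paper uses the Lagrange one. A local limit at interior points gives the reverse inequality. For $\alpha\le2$ you minimize over $\gamma$ to get $\lno{v}_{2/(3-\alpha)}^2$ and then minimize that quasi-norm over $\tsphere{K}$. One small point there: when some $v_k=0$ the inner minimum is only an infimum, not attained in $\relint(\simplex{K})$, and the paper handles this separately. Your $K\ge3$, $\alpha>2$ argument also matches the paper. The genuine gap is the binary case $\alpha>2$. There you mis-evaluate a one-dimensional infimum, conclude wrongly that the variational bound is not sharp, and propose a non-local argument that you never carry out and that is aimed at the wrong extremal configuration. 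Your account is also internally inconsistent: first $\alpha>3$ is the exceptional regime, then $\alpha\in(2,3]$ is "the hard part."

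Your local argument gives $C_{\alpha,K}\le\sum_k u_k^2\zeta_k^{\alpha-2}$ for every interior $\zeta$ and every $u\in\tsphere{K}$. So $C_{\alpha,K}\le\lambda_{\alpha,K}$ always holds. The proviso "whenever the infimum is approached at interior points" is vacuous, because $\lambda_{\alpha,K}$ is by definition an infimum over $\relint(\simplex{K})$. Hence $C_{\alpha,K}=\lambda_{\alpha,K}$ in every regime, including $K=2$.

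Your claim that $\tfrac14\inf_x\bigl(x^{\alpha-2}+(1-x)^{\alpha-2}\bigr)$ is attained at $x=1/2$ is correct only for $\alpha\ge3$. For $\alpha\in(2,3)$, set $s=\alpha-2\in(0,1)$. Then $x\mapsto x^s+(1-x)^s$ is concave, so $x=1/2$ is its maximizer. Since $x^s\ge x$ on $[0,1]$, we get $x^s+(1-x)^s\ge1$, with the infimum approached at the endpoints. This gives $\lambda_{\alpha,2}=\tfrac14$ for $\alpha\in(2,3]$, which is exactly the claimed constant, so there is no two-variable inequality left to prove.

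For $\alpha>3$, the value $2^{1-\alpha}$ at $x=1/2$ is already sharp, since it is realized locally at an interior point. Your proposed extremal pair is in fact not extremal. For $p\to(1,0)$ and $q\to(0,1)$,
\[
\frac{2D_\alpha(p\Vert q)}{\lno{p-q}_1^2}\to\frac{1}{2(\alpha-1)}.
\]
This exceeds $2^{1-\alpha}$ for $\alpha>3$ and exceeds $\tfrac14$ for $\alpha\in(2,3)$. As written, the case $K=2$, $\alpha>2$ is therefore not established and rests on false claims. The missing step is simply the concave/convex split at $\alpha=3$ of $x\mapsto x^{\alpha-2}+(1-x)^{\alpha-2}$, which is how the paper concludes.
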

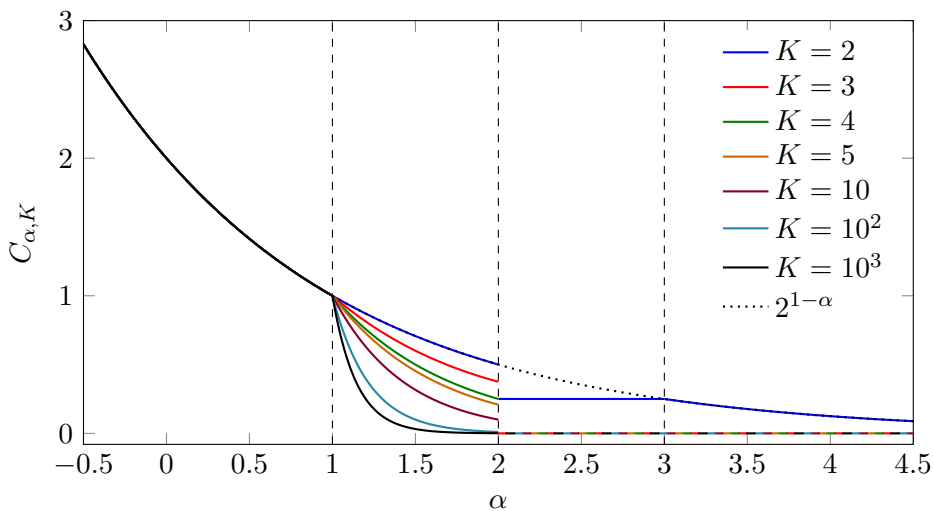
\begin{figure}[t]
\centering
\begin{tikzpicture}
\begin{axis}[
    width=12.5cm,
    height=7.2cm,
    xlabel={$ \alpha $},
    ylabel={$ C_{\alpha,K} $},
    xmin=-0.5, xmax=4.5,
    ymin=-0.08, ymax=3,
    ytick={0,1,2,3},
    title={Sharp Pinsker constants $C_{\alpha,K}$ (\Cref{thm:pinsker})},
    legend cell align=left,
    legend style={draw=none, cells={anchor=west}},
]

\addplot[black, dashed, forget plot] coordinates {(1,-0.08) (1,3)};
\addplot[black, dashed, forget plot] coordinates {(2,-0.08) (2,3)};
\addplot[black, dashed, forget plot] coordinates {(3,-0.08) (3,3)};

\addplot[thick, blue, domain=-0.5:2, samples=400] {pow(2,1-x)};
\addlegendentry{$K=2$}
\addplot[thick, blue, domain=2:3, samples=2, forget plot] {0.25};
\addplot[thick, blue, domain=3:4.5, samples=250, forget plot] {pow(2,1-x)};

\def\K{3}
\addplot[thick, red, domain=-0.5:1, samples=250] {pow(2,1-x)};
\addlegendentry{$K=3$}
\addplot[thick, red, domain=1:2, samples=350, forget plot]
{
    pow(\K,1-x) *
    pow(
        0.5 * (
            pow(1-1/\K, (1-x)/(3-x)) +
            pow(1+1/\K, (1-x)/(3-x))
        ),
        3-x
    )
};

\def\K{4}
\addplot[thick, green!50!black, domain=-0.5:1, samples=250] {pow(2,1-x)};
\addlegendentry{$K=4$}
\addplot[thick, green!50!black, domain=1:2, samples=350, forget plot] {pow(\K,1-x)};

\def\K{5}
\addplot[thick, orange!80!black, domain=-0.5:1, samples=250] {pow(2,1-x)};
\addlegendentry{$K=5$}
\addplot[thick, orange!80!black, domain=1:2, samples=350, forget plot]
{
    pow(\K,1-x) *
    pow(
        0.5 * (
            pow(1-1/\K, (1-x)/(3-x)) +
            pow(1+1/\K, (1-x)/(3-x))
        ),
        3-x
    )
};

\def\K{10}
\addplot[thick, purple!70!black, domain=-0.5:1, samples=250] {pow(2,1-x)};
\addlegendentry{$K=10$}
\addplot[thick, purple!70!black, domain=1:2, samples=350, forget plot] {pow(\K,1-x)};

\def\K{100}
\addplot[thick, cyan!60!black, domain=-0.5:1, samples=250] {pow(2,1-x)};
\addlegendentry{$K=10^2$}
\addplot[thick, cyan!60!black, domain=1:2, samples=350, forget plot] {pow(\K,1-x)};

\def\K{1000}
\addplot[thick, black, domain=-0.5:1, samples=250] {pow(2,1-x)};
\addlegendentry{$K=10^3$}
\addplot[thick, black, domain=1:2, samples=350, forget plot] {pow(\K,1-x)};

\addplot[
  thick, red,
  dash pattern=on 3pt off 15pt,
  dash phase=0pt,
  line cap=round,
  domain=2:4.5, samples=2,
  forget plot
] {0};

\addplot[
  thick, green!50!black,
  dash pattern=on 3pt off 15pt,
  dash phase=3pt,
  line cap=round,
  domain=2:4.5, samples=2,
  forget plot
] {0};

\addplot[
  thick, orange!80!black,
  dash pattern=on 3pt off 15pt,
  dash phase=6pt,
  line cap=round,
  domain=2:4.5, samples=2,
  forget plot
] {0};

\addplot[
  thick, purple!70!black,
  dash pattern=on 3pt off 15pt,
  dash phase=9pt,
  line cap=round,
  domain=2:4.5, samples=2,
  forget plot
] {0};

\addplot[
  thick, cyan!60!black,
  dash pattern=on 3pt off 15pt,
  dash phase=12pt,
  line cap=round,
  domain=2:4.5, samples=2,
  forget plot
] {0};

\addplot[
  thick, black,
  dash pattern=on 3pt off 15pt,
  dash phase=15pt,
  line cap=round,
  domain=2:4.5, samples=2,
  forget plot
] {0};

\addplot[dotted, thick, domain=-0.5:4.5, samples=400] {pow(2,1-x)};
\addlegendentry{$2^{1-\alpha}$}

\end{axis}
\end{tikzpicture}
\caption{\label{fig:CalphaK-regimes}Sharp Pinsker constants $C_{\alpha,K}$ for selected $K$. Dashed lines mark $\alpha=1,2,3$ where the phase transitions occur.}
\end{figure}
Note that, in light of Remark~\ref{r:kullback-leibler} and to the best of our knowledge, we also establish for the first time a sharp Pinsker-type inequality for \emph{all} $\beta$-divergences (including the Itakura-Saito divergence) for $\beta \neq 1$, and we recover the classic (sharp) Pinsker inequality for $\beta = 1$
\[
    \KL (p\Vert q)
=
    D_1 (p\Vert q)
\ge
    \frac{1}{2} \cdot  \norm{p-q}_1^2\;, \qquad \text{for all $p,q\in\relint\lrb{\simplex{K}}$}\;.
\]
As an immediate corollary of \Cref{thm:pinsker}, we also obtain the $\lno{\cdot}_1$-strong convexity constant of $-S_\alpha$, i.e., the curvature parameter $\mu$ that appears in standard Follow-the-Regularized-Leader and mirror-descent analyses with Tsallis regularization.
\begin{corollary}\label{cor:strong-convexity}
The negative Tsallis entropy $-S_\alpha$ is $ C_{\alpha,K}$-strongly convex with respect to $\lno{\cdot}_1$.
Moreover, this value $C_{\alpha,K}$ is optimal, i.e., it is the largest strong convexity parameter for which the inequality holds.
\end{corollary}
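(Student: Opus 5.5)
The corollary follows from \Cref{thm:pinsker} once we observe that, for a differentiable convex function, strong convexity with respect to a norm is the same as a uniform quadratic lower bound on its Bregman divergence. Recall that $-S_\alpha$ is $\mu$-strongly convex with respect to $\lno{\cdot}_1$ on $\relint\lrb{\simplex{K}}$ if, for all $p,q\in\relint\lrb{\simplex{K}}$,
\[
    -S_\alpha(p) \ge -S_\alpha(q) + \lan{-\nabla S_\alpha(q), p-q} + \frac{\mu}{2}\lno{p-q}_1^2\;.
\]
By \Cref{def:bregman}, this is precisely $D_\alpha(p\Vert q)\ge \frac{\mu}{2}\lno{p-q}_1^2$ for all such $p,q$. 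The domain is the relint of the simplex, which is convex, and on it $-S_\alpha$ is $C^\infty$, so the gradient is well defined.

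The argument has two steps.

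\emph{Validity.} By \Cref{thm:pinsker}, $D_\alpha(p\Vert q)\ge \frac{C_{\alpha,K}}{2}\lno{p-q}_1^2$ for all $p,q\in\relint\lrb{\simplex{K}}$. By the equivalence above, $-S_\alpha$ is $C_{\alpha,K}$-strongly convex.

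\emph{Optimality.} Suppose $-S_\alpha$ is $\mu$-strongly convex. The equivalence gives $D_\alpha(p\Vert q)\ge\frac{\mu}{2}\lno{p-q}_1^2$ for every pair of points in the relint. Since $C_{\alpha,K}$ is defined in \Cref{thm:pinsker} as the largest constant with this property, $\mu\le C_{\alpha,K}$.

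If one instead uses the alternative definition of strong convexity via the function $-S_\alpha-\frac{\mu}{2}\lno{\cdot}_1^2$, or via the secant inequality with weights $\lambda$, its equivalence with the first-order (Bregman) form for differentiable functions on a convex set is standard. It is obtained by differentiating the secant inequality at $\lambda\to 0$ and, conversely, by adding the first-order inequality at the point $\lambda p+(1-\lambda)q$; note that $\lno{\cdot}_1$ need not be differentiable, but this is not needed.

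The only subtle point is the case $C_{\alpha,K}=0$ ($\alpha>2$, $K\ge3$). There the statement reads as plain convexity being optimal, and the optimality step shows that no $\mu>0$ works. There is no real obstacle beyond fixing conventions, since all the difficulty lies in \Cref{thm:pinsker}.
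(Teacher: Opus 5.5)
Your argument is correct and follows the paper's route. The paper states the corollary as immediate from \Cref{thm:pinsker}, via the observation made in its introduction: under the first-order definition, $\mu$-strong convexity of $-S_\alpha$ on $\relint\lrb{\simplex{K}}$ with respect to $\lno{\cdot}_1$ is literally the inequality $D_\alpha(p\Vert q)\ge\frac{\mu}{2}\lno{p-q}_1^2$. Validity and optimality then both reduce to the definition of $C_{\alpha,K}$ as the largest admissible constant, and your derivation of the equivalence with the secant form is also fine.

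One correction to your closing aside: the definition ``$-S_\alpha-\frac{\mu}{2}\lno{\cdot}_1^2$ is convex'' is \emph{not} equivalent to the first-order one here. That equivalence rests on the identity $\lambda\lno{x}^2+(1-\lambda)\lno{y}^2-\lno{\lambda x+(1-\lambda)y}^2=\lambda(1-\lambda)\lno{x-y}^2$, which holds only for inner-product norms. For $\lno{\cdot}_1$ on the simplex it breaks down completely. Since $\lno{p}_1=1$ for every $p\in\relint\lrb{\simplex{K}}$, the function $-S_\alpha-\frac{\mu}{2}\lno{\cdot}_1^2=-S_\alpha-\frac{\mu}{2}$ is convex for every $\mu$, so under that definition the optimality claim would be false. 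Drop that alternative and keep only the first-order (equivalently, secant) formulation. That is also the convention used in the FTRL/mirror-descent analyses the paper refers to.
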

We now present a series of remarks commenting on the phase transition behavior for $C_{\alpha,K}$.
Beyond their intrinsic information-theoretic interest, these regimes clarify how \emph{dimension}, \emph{the choice of $\alpha$}, and \emph{the binary vs.\ multiclass setting} determine the strength of the conversion from Tsallis-score excess risk (equivalently, the divergence $D_\alpha$) into $\lno{\cdot}_1$ control.
\begin{remark}
For $\alpha \le 1$ the value of $C_{\alpha, K}$ does not depend on the dimension $K$.
\end{remark}
This means that, for $\alpha \le 1$, the excess-risk-to-$\lno{\cdot}_1$ conversion in \Cref{thm:pinsker} is dimension-free, implying, in particular, that in large multiclass settings, $\alpha \le 1$ avoids a curse of dimensionality in this conversion.

The next remark quantifies the parity correction $\sigma_{\alpha,K}$ for $\alpha\in(1,2]$ and odd~$K$, and makes precise that $C_{\alpha,K}=K^{1-\alpha}$ up to the lower-order factor $\sigma_{\alpha,K}=1+O(K^{-2})$.
\begin{remark}
\label{rem:sigma}
For $\alpha \in (1,2]$ and $K$ odd, the constant $C_{\alpha, K}$ equals $K^{1-\alpha}$ multiplied by the scaling term $\sigma_{\alpha, K}$, which satisfies $\sigma_{\alpha, K} > 1$ and $\sigma_{\alpha, K} \to 1$ as $\alpha \to 1^+$. We also have the estimates
\[
    1 \le 1 + \frac{\alpha - 1}{3- \alpha} 
        \cdot\frac{1}{K^2} 
\le
    \sigma_{\alpha, K}
\le
    1 + \frac{7(\alpha - 1)}{6(3- \alpha)} 
        \cdot\frac{1}{K^2}
\le
    1 + \frac{7}{6} \cdot
        \frac{1}{K^2}\;.
\]
See Appendix~\ref{app:sigma} for a proof of this. In particular, $K^{1 - \alpha}< C_{\alpha, K} \approx K^{1 - \alpha}$ for $\alpha \in (1,2]$ and $K$ odd.
\end{remark}
Another curious fact can be noted in the case $\alpha = 2$, which is of particular interest since $D_2$ coincides with half the squared Euclidean distance.\footnote{
Indeed, according to Definition~\ref{def:tsallis}, we have
$
    -S_2(p)
=
    \frac{1}{2}\lno{p}_2^2,
$
and hence, for any $p,q \in \relint\lrb{\simplex{K}}$, we have 
$
    D_2(p\Vert q)
=
    \frac{1}{2}\lno{p}_2^2-\frac{1}{2}\lno{q}_2^2 - \lan{q,p-q}
=
    \frac{1}{2}\lno{p}_2^2 +\frac{1}{2}\lno{q}_2^2 - \lan{p,q}
=
    \frac{1}{2}\lno{p-q}_2^2\;.
$
}
\begin{remark}
\label{r:euclidean}
Setting $\alpha = 2$ in \Cref{thm:pinsker} yields
\[
    C_{2, K}
=    
        \begin{cases}
            \frac{1}{K}
        &
            \text{for $K$ even}\;,
        \\
            \frac{1}{K} \cdot
    \frac{K^2}{K^2-1}
        &
            \text{for $K$ odd}\;.
        \end{cases}
\]  
Recalling also the relationship between $D_2(p\Vert q)$ and the Euclidean distance
 between $p$ and $q$,
we recover the well-known inequality $\lno{p-q}_2^2 \ge \frac{1}{K} \cdot \lno{p-q}_1^2$, but with the additional observation that the constant $\frac{1}{K}$ is not sharp when $K$ is odd and $p, q$ are confined in $\simplex{K}$.
\end{remark}
For $\alpha>2$, the picture changes qualitatively, and a sharp distinction emerges between the binary case $K=2$ and the multiclass case $K\ge 3$.
In the binary setting, a nontrivial Pinsker-type inequality continues to hold, so small Tsallis-losses excess risk still enforces $\lno{\cdot}_1$-closeness of predictive distributions.
In particular, we note the following:
\begin{remark}
For $K = 2$ the map $\alpha \mapsto C_{\alpha, K}$ is a piecewise smooth function of $\alpha$ on $[0, +\infty)$, whose sole discontinuity is a jump for $\alpha = 2$.
\end{remark}
In contrast, in the multiclass setting $K\ge 3$, the $\alpha>2$ regime is inherently pathological to convert excess-risk bounds into $\lno{\cdot}_1$ ones.
\begin{remark}\label{rem:no-pinsker-K-ge-3}
    Theorem~\ref{thm:pinsker} states that for $\alpha >2$ and $K \geq 3$ it is not possible to find a \emph{positive} constant $C$ such that $D_\alpha(p\Vert q) \ge C\cdot
    \lno{ p-q }_1^2$ for every $p$, $q \in \relint{\lrb{\simplex{K}}}$.
\end{remark}
See also Appendix~\ref{app:nopinsker} for an explicit construction of one-parameter families $p(t),q(t)\in \relint(\simplex{K})$ for which $D_\alpha\brb{p(t)\Vert q(t)}=o\brb{\lno{p(t)-q(t)}_1^2}$, as $t \to 0$.
Despite Remark~\ref{rem:no-pinsker-K-ge-3}, it is possible to prove that a Pinsker-type inequality can still hold when all coordinates of $p$ and $q$ are bounded away from zero. The interested reader may find a proof of this in \Cref{app:clipping}.

We conclude this section by remarking what happens if a Pinsker-type inequality is sought not only over $\relint{\lrb{\simplex{K}}}$, but on the whole $(0,+\infty)^K$ instead.
\begin{remark}[No uniform Pinsker inequality on $(0,+\infty)^K$ except for $\alpha=2$]
\label{rem:nopinsker-orthant}
If $\alpha\neq 2$, there is no constant $C>0$ such that
$D_\alpha(p\Vert q)\ge C\cdot \lno{p-q}_1^2$, for all $p,q\in(0,+\infty)^K$. If $\alpha = 2$, then for all $p,q\in(0,+\infty)^K$ it holds that $D_2(p \Vert q) \ge \frac{1}{2K} \cdot \lno{p-q}_1^2$ and the constant $\frac{1}{2K}$ is optimal.

A proof of this remark can be found in \Cref{app:remarks}.
\end{remark}
\section{Geometric Intuition Behind the Proof of \texorpdfstring{\Cref{thm:pinsker}}{Theorem \ref{thm:pinsker}}}
\label{sec:proof_Pinsker_new}

We now give some insight into the technique we use to derive the explicit expression for the sharp constant $C_{\alpha,K}$.
First, we observe that $C_{\alpha,K}$ admits a variational formulation involving the Hessian
of $S_{\alpha}$ and the independent variables $(\gamma, v)$, where $\gamma$ lies in $\relint{\lrb{\simplex{K}}}$, whereas $v$ is a $\lno{\cdot}_1$-unit vector (i.e., $\lno{v}_1 = 1$) and it is tangent to $\relint{\lrb{\simplex{K}}}$ (i.e., $\sum_{k=1}^K v_k = 0$).
Indeed, $C_{\alpha,K}$ can be written as the infimum over $(\gamma, v)$ of a parametric quadratic form in the variable $v$ whose coefficients (i.e.,  $\gamma_k^{\alpha-2}$) depend on $\gamma$ and $\alpha$.
Of main importance is the qualitative behavior of these coefficients, which exhibits a phase transition at $\alpha = 2$.
Specifically, in the case $\alpha \le 2$ the coefficients admit a positive lower bound (since $\gamma_k < 1$ and $\alpha - 2 \le 0$), but in the case $\alpha > 2$ this is no longer true (since $\alpha-2>0$ and the corresponding $\gamma_k$ approaches $0$).

To handle the case $\alpha \le 2$, we first fix $v$ in the parametric quadratic form and take the infimum over all possible values of $\gamma$, which amounts (up to a multiplicative constant) to finding the minimal second directional derivative of $-S_{\alpha}$ in the direction $v$ over the relative interior of the simplex $\relint\lrb{\simplex{K}}$.
It turns out that this infimum equals the square of $\lno{v}_{\beta}$, with $\beta$ being a positive number (specifically, $\beta = \tfrac{2}{3-\alpha}$) that depends on $\alpha$,
and so $C_{\alpha, K}$ is related to minimizing $\lno{\cdot}_{\beta}$ over the tangent $\lno{\cdot}_1$-unit vectors. 
Interestingly, the expression for $C_{\alpha, K}$ reflects the localization of $v$ attaining such constrained minimum. 
Indeed, the case $\alpha \in (-\infty,1)$ yields $0 < \beta < 1$, and $\lno{v}_{\beta}$ is minimized by concentrating the mass of $v$ over two coordinates, which causes $C_{\alpha, K}$ to be independent of $K$.
If $\alpha = 1$, then also $\beta = 1$, and tautologically $\lno{v}_{\beta}= 1$, which results in $C_{1, K} = 1$. This recovers the classical Pinsker inequality.
The case $\alpha \in (1,2]$ yields $\beta > 1$, and $\lno{v}_{\beta}$ is minimized by spreading the mass as evenly as possible across coordinates, producing an expression for $C_{\alpha, K}$ that is decreasing in $K$ and with a subtle role played by the parity of $K$.

For $\alpha > 2$, we shall see that a substantial difference occurs depending on $K$.
In fact, if $K\ge 3$, then it is possible to show that $C_{\alpha, K} = 0$ thanks to the existence of some tangent vector $v$ that has some zero components.
For $K = 2$, a similar $v$ does not exist, and indeed $C_{\alpha, 2}> 0$. This time, a phase transition occurs at $\alpha = 3$, where the map $\gamma_k \mapsto\gamma_k^{\alpha-2}$ transitions from concave to convex.
\section{Toward the Proof: Useful Results}
For any finite set $A$, we use the notation $|A|$ to denote the number of elements in $A$.
\begin{remark}
\label{remark:hessian}
For any $\alpha \in \R$ and $p\in(0,+\infty)^K$, we have that
\[
    \partial^2_{ij} S_\alpha(p)
=
    - p_i^{\alpha-2} \cdot \I\{i = j\}\;,
\]
and hence, the Hessian of $S_\alpha(p)$ is the $K \times K$ matrix $HS_\alpha(p) = -\mathrm{diag}\brb{ p_1^{\alpha-2},\ldots, p_K^{\alpha-2}}$.
A proof of this remark can be found in \cref{app:remarks}.
\end{remark}
We now define the key quantities $C_{\alpha,K}$ appearing in the statement of \Cref{thm:pinsker}.
\begin{definition}\label{def:const}
    For every $\alpha \in \R$, we define
    \begin{equation}\label{eq:sharp}    
        C_{\alpha, K}
    \coloneq
        2 \cdot  \inf \lcb{\frac{D_\alpha\brb{p\Vert q}}{\lno{p - q}_1^2} : p , q \in  \relint\lrb{\simplex{K}}, p \neq q}\;.
    \end{equation}
    \end{definition}
\begin{remark}\label{remark:const-is-best}
    By construction, the constant $C_{\alpha, K}$ given in Definition~\ref{def:const} is the largest $C \geq 0$ such that $D_\alpha\brb{p\Vert q} \ge \frac{C}{2} \cdot \lno{p - q}_1^2$ for every $p$, $q \in  \relint\lrb{\simplex{K}}$.
    A proof of this fact can be found in \cref{app:remarks}.
    The proof of Theorem~\ref{thm:pinsker} amounts to computing an explicit expression for $C_{\alpha, K}$.
\end{remark}
\begin{definition}
    We set
\begin{equation}\label{eq:1-sphere-section}
    \tsphere{K}
\coloneq
    \lcb{v \in \R^K :\lno{v}_1=1, \sum_{k\in[K]}v_k = 0}\;.
\end{equation}
\end{definition}
\begin{remark}    
    For every $\gamma \in \relint{\lrb{\simplex{K}}}$, the vector space $\{ v \in \R^K : \sum_{k \in [K]} v_k = 0\}$ can be thought as the \emph{tangent space} of $\relint{\lrb{\simplex{K}}}$ in $\gamma$, and $\tsphere{K}$ is precisely the unit sphere thereof in the norm $\lno{. }_1$.
\end{remark} 

\begin{proposition}[Reformulation via quadratic forms]\label{prop:second_order}
For every $\alpha \in \R$ we have
\begin{equation}\label{eq:cambio}
     C_{\alpha, K}
=
    \inf \lcb{\sum_{k \in [K]} v_k^2 \gamma_k^{\alpha -2} : v \in \tsphere{K}, \gamma \in \relint\lrb{\simplex{K}}}\;.    
\end{equation}
\end{proposition}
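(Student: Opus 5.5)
The plan is to prove the two inequalities between $C_{\alpha,K}$ (as in Definition~\ref{def:const}) and the infimum on the right-hand side of \eqref{eq:cambio}, which we denote by $Q$. Both directions rest on the integral form of Taylor's theorem for the Bregman divergence. Fix $p\neq q$ in $\relint\lrb{\simplex{K}}$ and set $\gamma(t)\coloneqq q+t(p-q)$ for $t\in[0,1]$, which lies in $\relint\lrb{\simplex{K}}$ by convexity. Since $-S_\alpha$ is $C^\infty$ on $(0,+\infty)^K$, Remark~\ref{remark:hessian} gives
\[
    D_\alpha(p\Vert q)=\int_0^1 (1-t)\,(p-q)^\top\brb{-HS_\alpha(\gamma(t))}(p-q)\diff t
    =\int_0^1(1-t)\sum_{k\in[K]}(p_k-q_k)^2\gamma_k(t)^{\alpha-2}\diff t\;.
\]

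\textbf{Lower bound $C_{\alpha,K}\ge Q$.} Write $p-q=\lno{p-q}_1\, v$. Since $\sum_k (p_k-q_k)=0$, the vector $v$ lies in $\tsphere{K}$. Each integrand is then at least $\lno{p-q}_1^2\,Q$, because $\gamma(t)\in\relint\lrb{\simplex{K}}$. Integrating $(1-t)$ over $[0,1]$ gives the factor $1/2$, so
\[
    D_\alpha(p\Vert q)\ge \tfrac{Q}{2}\lno{p-q}_1^2\;,
\]
and therefore $C_{\alpha,K}\ge Q$. This direction is immediate.

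\textbf{Upper bound $C_{\alpha,K}\le Q$.} Fix any $v\in\tsphere{K}$ and $\gamma\in\relint\lrb{\simplex{K}}$. Take $q=\gamma$ and $p=\gamma+\e v$ for small $\e>0$; then $p\in\relint\lrb{\simplex{K}}$ because $\gamma$ has positive coordinates and $\sum_k v_k=0$. The same Taylor formula gives
\[
    \frac{2D_\alpha(p\Vert q)}{\lno{p-q}_1^2}=2\int_0^1(1-t)\sum_{k\in[K]}v_k^2(\gamma_k+t\e v_k)^{\alpha-2}\diff t\;.
\]
As $\e\to0^+$ this converges to $\sum_k v_k^2\gamma_k^{\alpha-2}$. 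The convergence follows by uniform convergence of the integrand on $[0,1]$, since $\gamma_k+t\e v_k$ stays in a compact subset of $(0,\infty)$ and $x\mapsto x^{\alpha-2}$ is continuous there. Hence $C_{\alpha,K}\le\sum_k v_k^2\gamma_k^{\alpha-2}$, and taking the infimum over $(v,\gamma)$ yields $C_{\alpha,K}\le Q$.

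There is no real obstacle. The only points needing care are the validity of the integral remainder formula, which requires the segment to stay inside the open orthant where $-S_\alpha$ is smooth, and the justification of the limit; both follow from the convexity of $\relint\lrb{\simplex{K}}$ and continuity. The special cases $\alpha\in\{0,1\}$ are covered uniformly because Remark~\ref{remark:hessian} holds for all $\alpha\in\R$.
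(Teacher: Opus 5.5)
Your proof is correct and has the same structure as the paper's. A second-order Taylor expansion of $-S_\alpha$ along the segment from $q$ to $p$ inside $\relint\lrb{\simplex{K}}$ gives $C_{\alpha,K}\ge Q$, and a vanishing perturbation around a fixed $(\gamma,v)$ gives $C_{\alpha,K}\le Q$. The only (cosmetic) difference is that you use the integral remainder and pass to the limit by uniform convergence, whereas the paper uses the Lagrange remainder with the symmetric points $\zeta\pm\frac{t}{2}u$ and concludes by continuity since the intermediate point satisfies $r(t)\to\zeta$.
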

\begin{proof}
Let $C'_{\alpha, K}$ be the RHS of \eqref{eq:cambio}.
Given two distinct $p$, $q \in \relint\lrb{\simplex{K}}$, we consider the following first-order Taylor expansion of $S_{\alpha}(p)$ with Lagrange remainder:
\begin{equation}\label{eq:taylor}
    S_{\alpha}(p)
    = S_{\alpha}\brb{q} + \langle\nabla S_{\alpha}(q), p-q \rangle
    + \frac{1}{2} (p-q)^\top H S_{\alpha}\brb{\gamma} (p-q) \;,
\end{equation}    
with $\gamma \coloneq xp + (1-x) q$ for a suitable $x \in (0,1)$ and with $H S_{\alpha}$ being the Hessian of $S_\alpha$. In particular, $\gamma$ is a convex combination of $p$ and $q$, hence $\gamma \in \relint\lrb{\simplex{K}}$.
Define $v \coloneq \frac{p - q}{\lno{p - q}_1}$, note that $v \in \tsphere{K}$,
and by \eqref{eq:taylor}, by Definition~\ref{def:bregman}, and by Remark~\ref{remark:hessian} we observe that
\begin{equation}\label{eq:change-of-variables}
    2\cdot \frac{D_\alpha\brb{p\Vert q}}{\lno{p - q}_1^2}
=
    -  \lrb{\frac{p - q}{\lno{p - q}_1}}^\top H S_{\alpha}(\gamma) \frac{p - q}{\lno{p - q}_1}
=
    \sum_{k \in [K]} v_k^2 \gamma_k^{\alpha -2}\;,
\end{equation}
i.e., that $2 \cdot D_\alpha\brb{p\Vert q}\lno{p - q}_1^{-2} \ge C'_{\alpha, K}$. Since $p$ and $q$ were arbitrarily chosen, this gives $C_{\alpha, K} \ge C'_{\alpha, K}$. To prove the opposite inequality,  let $\zeta \in \relint\lrb{\simplex{K}}$ and let $u \in \tsphere{K}$. Define for $t>0$ the points $\tilde{p}(t) \coloneq \zeta + t \frac{1}{2} u$ and $\tilde{q}(t) \coloneq \zeta - t \frac{1}{2} u$,
which lie in $\relint\lrb{\simplex{K}}$ if $t$ is small enough 
and which satisfy $u = \frac{\tilde{p}(t) - \tilde{q}(t)}{\lno{\tilde{p}(t) - \tilde{q}(t)}_1}$.
Then, arguing as before, we deduce the existence of some
$r(t)$ that is a convex combination of $\tilde{p}(t)$ and $\tilde{q}(t)$ and such that 
\eqref{eq:change-of-variables} is valid for
$(p, q, v, \gamma) = (\tilde{p}(t), \tilde{q}(t), u, r(t))$.
In particular, $C_{\alpha, K} \le \sum_{k \in [K]} u_k^2 \brb{r_k(t)}^{\alpha -2}$, thus 
$C_{\alpha, K} \le \sum_{k \in [K]} u_k^2 \zeta_k^{\alpha -2}$,
since $r(t) \to \zeta$ as $t \to 0^+$. By the arbitrariness of $\zeta$ and $u$, it follows that $C_{\alpha, K} \le C'_{\alpha, K}$, and so $C_{\alpha, K} = C'_{\alpha, K}$.
\end{proof}
In the sequel, we will need two technical lemmas. The first lemma addresses the minimization of a certain class of homogeneous functionals over $\relint{\lrb{\simplex{K}}}$. The second lemma quantifies how close is $\tsphere{K}$ to the origin with respect to all Minkowski's (pseudo-) norms.
\begin{lemma}[Infimum over the relative interior of the simplex]\label{lemma:mini_sum_convex}
Let $\nu \ge 0$ and consider $\lambda \in \R^K$ such that $\lambda \geq 0$.
Then 
\[
    \inf \lcb{
    \sum_{k \in [K]} \lambda_k\gamma_k^{-\nu} :  \gamma \in \relint{\lrb{\simplex{K}}}
    }
=
    \lrb{\sum_{k \in [K]} \lambda_k^{\frac{1}{\nu + 1}}}^{\nu + 1}\;.
\]
\end{lemma}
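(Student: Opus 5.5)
The natural route is Hölder's inequality for the lower bound, plus an explicit minimizing sequence (or minimizer) for the matching upper bound. Set $\beta \coloneq \frac{1}{\nu+1} \in (0,1]$, write $L \coloneq \sum_{k\in[K]} \lambda_k^{\beta}$, and let $P \coloneq \{k \in [K] : \lambda_k > 0\}$.

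\emph{Degenerate cases.} If $\nu = 0$, the claim reads $\inf_\gamma \sum_k \lambda_k = \sum_k \lambda_k$, which is trivial. If $\lambda = 0$, both sides are $0$. From now on assume $\nu > 0$ and $P \neq \emptyset$.

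\emph{Lower bound.} For $\gamma \in \relint\lrb{\simplex{K}}$, split each factor and apply Hölder's inequality with exponents $\nu+1$ and $\frac{\nu+1}{\nu}$:
\[
    \sum_{k\in P} \lambda_k^{\beta}
=
    \sum_{k\in P} \brb{\lambda_k \gamma_k^{-\nu}}^{\beta} \gamma_k^{\nu\beta}
\le
    \lrb{\sum_{k\in P} \lambda_k\gamma_k^{-\nu}}^{\beta} \lrb{\sum_{k\in P} \gamma_k}^{\frac{\nu}{\nu+1}} .
\]
Since $\sum_{k\in P}\gamma_k \le 1$, raising both sides to the power $\nu+1$ gives $\sum_k \lambda_k \gamma_k^{-\nu} \ge L^{\nu+1}$. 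Terms with $\lambda_k = 0$ contribute $0$ on the left regardless of $\gamma_k$, so dropping them is harmless.

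\emph{Upper bound.} The Hölder equality case suggests choosing $\gamma_k \propto \lambda_k^{\beta}$. If every $\lambda_k > 0$, take $\gamma_k = \lambda_k^{\beta}/L$, which lies in $\relint\lrb{\simplex{K}}$. Then
\[
    \lambda_k \gamma_k^{-\nu} = \lambda_k^{1-\nu\beta} L^{\nu} = \lambda_k^{\beta} L^{\nu},
\]
and summing over $k$ gives exactly $L^{\nu+1}$. If some $\lambda_k = 0$, this choice puts $\gamma$ on the boundary. Instead, use
\[
    \gamma^\varepsilon \coloneq (1-\varepsilon)\,\frac{\lambda^{\beta}}{L} + \frac{\varepsilon}{K}\,\mathbf{1}, \qquad \varepsilon \in (0,1),
\]
which lies in the relative interior. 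The objective then only involves $k \in P$, where $\gamma^\varepsilon_k \to \lambda_k^{\beta}/L > 0$. By continuity, the objective tends to $L^{\nu+1}$ as $\varepsilon \to 0^+$, so the infimum is at most $L^{\nu+1}$.

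Both steps are routine. The only genuine care points are the vanishing $\lambda_k$ (the infimum need not be attained in the relative interior) and the case $\nu = 0$, where the Hölder exponent $\frac{\nu+1}{\nu}$ is undefined. Both are handled by the separate treatments above.
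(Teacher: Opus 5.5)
Your proof is correct, but it follows a different route from the paper's.

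\textbf{The paper's route.} It first takes all $\lambda_k>0$. It argues that $f(\gamma)=\sum_k\lambda_k\gamma_k^{-\nu}$ is strictly convex and blows up at the relative boundary, so it attains a minimum in $\relint\lrb{\simplex{K}}$. The first-order condition $-\nu\lambda_k\hat\gamma_k^{-\nu-1}=\mu$ then identifies $\hat\gamma_k\propto\lambda_k^{1/(\nu+1)}$, and the value is computed directly. Vanishing coefficients are handled by a separate reduction: restrict to $S=\{k:\lambda_k>0\}$, and rescale by $W=\sum_{k\in S}\gamma_k$ to land in $\relint\lrb{\simplex{|S|}}$. This yields $W^{-\nu}\bigl(\sum_{k\in S}\lambda_k^{1/(\nu+1)}\bigr)^{\nu+1}$, and one takes $\inf_{W\in(0,1)}W^{-\nu}=1$.

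\textbf{What your route changes.} Hölder's inequality replaces the existence argument and the optimality conditions with a certificate valid for every $\gamma$. It also absorbs the zero-coefficient case into the single bound $\sum_{k\in P}\gamma_k\le 1$. So you need no coercivity or attainment reasoning and no passage to a lower-dimensional simplex. Your sequence $\gamma^\varepsilon$ plays the role of the paper's limit $W\to 1^-$.

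\textbf{What the paper's route gives.} It derives the minimizer from the optimality conditions rather than reading it off from the Hölder equality case. It is also a template that extends to objectives where no clean Hölder certificate is available.

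Incidentally, your $\gamma^\varepsilon$ construction works verbatim when all $\lambda_k>0$, so the case split in your upper bound is not needed.
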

\begin{proof}
Define $S \coloneq \{k \in [K]: \lambda_k>0\}$. We assume $S \neq \emptyset$ and $\nu > 0$, for otherwise the proof is trivial.
Define on $(0,+\infty)^{K}$ the function $f(\gamma) \coloneq \sum_{k \in [K]}  \lambda_k \gamma_k^{-\nu}$.
For the case $S = [K]$, note that $f$ is strictly convex on $\relint\lrb{\simplex{K}}$ and $f(\gamma) \to +\infty$ as 
$\gamma_k \to 0$ for any $k$,
thus the restriction of $f$ to $\relint\lrb{\simplex{K}}$ attains its minimum on some $\hat{\gamma} \in \relint\lrb{\simplex{K}}$. By the first-order optimality conditions, there exists some $\mu \in \R$ such that $\partial_k f (\hat{\gamma}) =  - \nu \lambda_k \hat{\gamma}_k^{- \nu - 1} = \mu$ for every $k \in [K]$.  
But then the value of 
$\lambda_k^{- \frac{1}{\nu + 1}} \hat{\gamma}_k$ is independent of $k$, and so $\hat{\gamma}_k = \lrb{\sum_{i \in [K]} \lambda_i^{\frac{1}{\nu + 1}} }^{-1}\lambda_k^{\frac{1}{\nu + 1}}$, and
\begin{align*}
    f(\hat{\gamma})
&=
    \sum_{k \in [K]} \lambda_k
    \lrb{\sum_{i \in [K]} \lambda_i^{\frac{1}{\nu + 1}}}^{\nu}
    \lambda_k^{-\frac{\nu }{\nu + 1}}
=
    \sum_{k \in [K]} \lambda_k^{\frac{1}{\nu + 1}}
    \lrb{\sum_{i \in [K]} \lambda_i^{\frac{1}{\nu + 1}}}^{\nu}
=
     \lrb{\sum_{k \in [K]} \lambda_k^{\frac{1}{\nu + 1}} }^{\nu + 1}\;.
\end{align*}
For the case $S \neq [K]$, write $S = \{ k_1, \dots k_{|S|}\}$
and define the mapping $\pi \colon \relint\lrb{\simplex{K}} \to \R^{|S|}$ given by $\pi_i(\gamma) \coloneq  \gamma_{k_i}/ (\sum_{k \in S} \gamma_k)$, so that
\begin{align}\label{eq:gammatilde}
    f(\gamma)
&=
    \sum_{k \in S}  \lambda_k \gamma_k^{-\nu}
=
    \lrb{\sum_{k \in S} \gamma_k}^{-\nu} \sum_{i = 1}^{|S|}  \lambda_{k_i} \lrb{\pi_i(\gamma)}^{-\nu}\;.
\end{align}
For every $W  \in (0,1)$ define $E_W \coloneq \lcb{\gamma \in \relint\lrb{\simplex{K}}: \sum_{k \in S} \gamma_k= W }$ and note that $\pi$ maps $E_W$ onto $\relint\lrb{\simplex{|S|}}$, hence by \eqref{eq:gammatilde} we get
\begin{align*}
    \inf_{\gamma \in E_W} f(\gamma) 
&=
    W^{-\nu} \inf_{\tilde{\gamma} \in \relint\lrb{\simplex{|S|}}} \sum_{i = 1}^{|S|}  \lambda_{k_i} \tilde{\gamma}_i^{-\nu}
=
    W^{-\nu} \lrb{\sum_{i = 1}^{|S|} \lambda_{k_i}^{\frac{1}{\nu + 1}} }^{\nu + 1}\,,
\end{align*}
where the last equality follows by the case $S = [K]$. In particular, this means that 
\begin{align*}
    \inf_{\gamma \in \relint\lrb{\simplex{K}}} f(\gamma)
&=
    \inf_{W \in (0,1)}
    \inf_{\gamma \in E_W} f(\gamma) 
=
    \inf_{W \in (0,1)}
    W^{-\nu}
    \lrb{\sum_{i = 1}^{|S|} \lambda_{k_i}^{\frac{1}{\nu + 1} }}^{\nu + 1} 
=
     \lrb{\sum_{k \in [K]} \lambda_k^{\frac{1}{\nu + 1}} }^{\nu + 1}\;,
\end{align*}
where we used that 
$\inf_{W \in (0,1)} W^{-\nu}
=
    \lim_{W \to 1^-} W^{-\nu}
=
    1$.
\end{proof}
\begin{lemma}[Changing norm on the tangent space]\label{lemma:nuuuu}
\begin{enumerate}
    \item\label{i:nu-0-1}
        For $0 < \beta <1$ we have
        $
            \min_{v \in \tsphere{K}} \lno{v}_{\beta}
        $
        $
            =  2^{\frac{1}{\beta} - 1}
        $,
        and the minimum is attained at
        \begin{equation}\label{eq:disequalized}
            v_k
        =
            \begin{cases}
                (-1)^k \cdot \frac{1}{2}
                &\text{for }k = 1, 2
        \\
                0
                &\text{otherwise;}
            \end{cases}
        \end{equation}
    \item\label{i:nu-1-infty}
        For $\beta > 1$ we have
        $
            \min_{v \in \tsphere{K}} \lno{v}_{\beta}
        =  
             2^{-1} \lrb{{\left\lfloor \fracc{K}{2} \right\rfloor}^{1 - \beta} +  {\left\lceil \fracc{K}{2} \right\rceil}^{1 - \beta}}^{1/\beta}
       $,
        and the minimum is attained at
        \begin{equation}\label{eq:equalized}
            v_k
        =
            \begin{cases}
                \frac{1}{2{\left\lfloor \frac{K}{2} \right\rfloor}}
                &\text{for }1 \le k \le  \left\lfloor \frac{K}{2} \right\rfloor
        \\
                -\frac{1}{2{\left\lceil \frac{K}{2} \right\rceil}}
                &\text{for }  \left\lfloor \frac{K}{2} \right\rfloor < k \le K.
            \end{cases}
        \end{equation}
\end{enumerate}
\end{lemma}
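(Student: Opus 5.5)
The plan is to split every $v \in \tsphere{K}$ into its positive and negative parts. Let $P \coloneq \{k : v_k > 0\}$ and $N \coloneq \{k : v_k < 0\}$. Since $\sum_k v_k = 0$ and $\lno{v}_1 = 1$, we have $\sum_{k\in P} v_k = \sum_{k\in N} |v_k| = \frac12$, and both $P$ and $N$ are nonempty. The problem then reduces to minimizing $\sum_{k\in P} |v_k|^\beta$ subject to $\sum_{k\in P}|v_k| = \frac12$, plus the same problem on $N$, and then optimizing over the sizes $m = |P|$ and $n = |N|$, which satisfy $m,n\ge 1$ and $m+n\le K$. Since $t\mapsto t^{1/\beta}$ is increasing, it suffices to work with $\lno{v}_\beta^\beta$.

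\textbf{Case $0<\beta<1$.} The map $t\mapsto t^\beta$ is concave and vanishes at $0$, so it is subadditive on $[0,\infty)$. Hence
\[
\sum_{k\in P}|v_k|^\beta \ge \Bigl(\sum_{k\in P}|v_k|\Bigr)^\beta = 2^{-\beta},
\]
and likewise on $N$. This gives $\lno{v}_\beta^\beta \ge 2\cdot 2^{-\beta} = 2^{1-\beta}$, so $\lno{v}_\beta \ge 2^{\frac1\beta - 1}$. The vector \eqref{eq:disequalized} attains this value by direct computation. It also lies in $\tsphere{K}$, since $K\ge 2$.

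\textbf{Case $\beta>1$.} Now $t\mapsto t^\beta$ is convex, so Jensen's (equivalently, power-mean) inequality on each block gives
\[
\sum_{k\in P}|v_k|^\beta \ge m\Bigl(\frac{1}{2m}\Bigr)^\beta = 2^{-\beta} m^{1-\beta},
\]
and similarly $2^{-\beta}n^{1-\beta}$ on $N$. Thus $\lno{v}_\beta^\beta \ge 2^{-\beta}\bigl(m^{1-\beta}+n^{1-\beta}\bigr)$. Because $1-\beta<0$, the function $x\mapsto x^{1-\beta}$ is decreasing, so we may assume $m+n=K$. It remains to minimize $g(m)\coloneq m^{1-\beta}+(K-m)^{1-\beta}$ over integers $1\le m\le K-1$.

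The function $x\mapsto x^{1-\beta}$ is convex on $(0,\infty)$. Therefore $g$ is convex and symmetric about $K/2$, and as a function of a real variable it is minimized at $K/2$. Over the integers, convexity plus symmetry means the minimum is attained at $m=\lfloor K/2\rfloor$ (and equivalently at $\lceil K/2\rceil$). This yields
\[
\lno{v}_\beta \ge 2^{-1}\Bigl(\lfloor K/2\rfloor^{1-\beta}+\lceil K/2\rceil^{1-\beta}\Bigr)^{1/\beta}.
\]
The vector \eqref{eq:equalized} has $\lfloor K/2\rfloor$ entries equal to $\frac{1}{2\lfloor K/2\rfloor}$ and $\lceil K/2\rceil$ entries equal to $-\frac{1}{2\lceil K/2\rceil}$. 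It lies in $\tsphere{K}$ and achieves equality.

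Existence of the minimum is automatic in both cases, since $\tsphere{K}$ is compact and $\lno{\cdot}_\beta$ is continuous; in any case it is exhibited directly by the two vectors above. The only mildly delicate step is the integer minimization of $g$. I would handle it by checking that $g(m)-g(m+1)\ge 0$ whenever $m+1\le K/2$; this follows from convexity and symmetry, because for such $m$ the pair $\{m+1,\,K-m-1\}$ is more balanced than $\{m,\,K-m\}$.
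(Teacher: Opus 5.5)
Your proposal is correct and follows essentially the same route as the paper. Like the paper, you split $v$ into its positive and negative supports (each of mass $\frac12$), use subadditivity of $t\mapsto t^\beta$ for $\beta<1$ and Jensen on each block for $\beta>1$, reduce to $|P|+|N|=K$, and minimize $m^{1-\beta}+(K-m)^{1-\beta}$ at the balanced split. Your convexity-plus-symmetry argument for the integer minimization is just a rephrasing of the paper's observation that this function is strictly decreasing on $(0,K/2]$.
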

\begin{proof}
Let $v \in \tsphere{K}$, and set $P \coloneq \{k \in [K]: v_k >0\}$ and $N \coloneq \{k \in [K]: v_k < 0\}$. By definition of $\tsphere{K}$, we have $\sum_{k \in P} |v_k| + \sum_{k \in N} |v_k| = 1$ and
$\sum_{k \in P} |v_k| - \sum_{k \in N} |v_k| = 0$, therefore $\sum_{k \in P} |v_k| = \sum_{k \in N} |v_k| = \frac{1}{2}$. 

\ref{i:nu-0-1}: For $0 < \beta <  1$ the function $x \mapsto x^{\beta}$ is strictly concave on $[0,+\infty)$.
Furthermore, we have $x^\beta + y^{\beta} \geq (x + y)^\beta$ for every $x$, $y \geq 0$, thus
\[
    \sum_{k \in [K] } |v_k|^{\beta}
=
    \sum_{k \in P } |v_k|^{\beta}
+
    \sum_{k \in N } |v_k|^{\beta}
\ge
    \lrb{\sum_{k \in P } |v_k|}^{\beta}
+
    \lrb{\sum_{k \in N } |v_k|}^{\beta}
= 2 \,\cdot \frac{1}{2^{\beta}} = 2^{1 - \beta}\;,    
\]
i.e., $\lno{v}_{\beta} \ge 2^{\frac{1}{\beta} - 1}$.
Finally, note that this inequality is an equality if and only if $|P|=|N|=1$, which holds true for $v$ defined as in \eqref{eq:disequalized}.

\ref{i:nu-1-infty}: For $\beta > 1$ the  function $x \mapsto x^{\beta}$ is strictly convex on $[0,+\infty)$ and Jensen's inequality yields 
\begin{equation}\label{eq:distribute}
    \sum_{k \in P } |v_k|^{\beta}
=
    |P| \cdot \sum_{k \in P } \frac{{|v_k|}^{\beta}}{|P|}
\ge 
    |P| \cdot \lrb{ \sum_{k \in P } \frac{|v_k|}{|P|}}^{\beta}
=
    |P| \cdot \lrb{\frac{1}{2|P|}}^{\beta}
=
    2^{-\beta}\cdot  |P|^{1 - \beta}\;,
\end{equation}
where the equality is attained if and only $v_k$ takes on the same value for every $k \in P$, i.e., if and only if $v_k = 1/\brb{2|P|}$ for every $k \in P$.
Similarly,
    $\sum_{k \in N } |v_k|^{\beta}
    \ge
    2^{-\beta} |N|^{1 - \beta} $,
where the equality is attained if and only if $v_k = -1/\brb{2|N|}$ for every $k \in N$.
We deduce that
\begin{equation}\label{eq:bounds-with-integers}    
    \sum_{k \in [K]} |v_k|^{\beta}
\ge
    2^{-\beta}\cdot \lrb{ |P|^{1 - \beta} +  |N|^{1 - \beta}}\;.
\end{equation}
Now we find what are the feasible values for the integers $|P|$ and $|N|$ that minimize the RHS of \eqref{eq:bounds-with-integers}. Without loss of generality, we assume $|P| \leq |N|$ and we observe that the minimizing values for $|P|$, $|N|$ satisfy the constraint $|P| + |N| = K$, i.e., that $\{ P, N\}$ is a partition of $[K]$, as a consequence of $|N|^{1 -   \beta} \geq (K - |P|)^{1 - \beta}$. 
Next, we observe that the function $\phi(x) \coloneq x^{1 -\beta} + (K - x)^{1- \beta}$ restricted to the interval $(0, K/2]$ is strictly decreasing, and so the minimum of the RHS of \eqref{eq:bounds-with-integers} is obtained for $\brb{|P|, |N|} = \lrb{\lfloor \frac{K}{2}\rfloor,\left\lceil \frac{K}{2} \right\rceil}$ and equals ${ 2^{-\beta}} \cdot \lrb{{\left\lfloor \frac{K}{2}\right\rfloor}^{1 - \beta} +  {\left\lceil \frac{K}{2} \right\rceil}^{1 - \beta}}$. This is precisely the value of $\lno{v}_{\beta}^{\beta}$ for $v$ defined as in \eqref{eq:equalized}.  
\end{proof}
\section{The Proof}
Now we have all the necessary tools to complete the proof of our main theorem.
{\renewcommand{\proofname}{Proof of Theorem~\ref{thm:pinsker}}
\begin{proof}
By Proposition~\ref{prop:second_order}, we know that 
\[
    C_{\alpha, K}
=
    \inf \lcb{ \sum_{k \in [K]} v_k^2 \gamma_k^{\alpha -2} : v \in \tsphere{K}, \gamma \in \relint\lrb{\simplex{K}}}\;.
\]  
Computing this infimum requires different strategies for different values of $\alpha$ and $K$.
\\
\mycase{$\alpha \le 2$}
Observe that $2 -\alpha  \ge 0$, thus by Lemma \ref{lemma:mini_sum_convex}
\[
    C_{\alpha, K}
=
    \inf_{v \in \tsphere{K}}
    \inf_{\gamma \in \relint\lrb{\simplex{K}}}
    \sum_{k \in [K]} v_k^2 \gamma_k^{\alpha -2}
=
    \inf_{v \in \tsphere{K}} \lrb{\sum_{k \in [K]} |v_k|^{\frac{2}{3 -\alpha}} }^{3-\alpha}
=
    \inf_{v \in \tsphere{K}} \lno{v}_{\beta}^2 \;,
\]
where we set $\beta \coloneq \frac{2}{3 - \alpha}$.
If $\alpha < 1$, then $0 < \beta < 1$, and so by Lemma~\ref{lemma:nuuuu} we get
\begin{align*}
    C_{\alpha, K}
&=
     \lrb{2^{\frac{1}{\beta} - 1}}^{ 2}
= 
     {2}^{ \frac{2}{\beta} - 2}
= 
     {2}^{ 1 - \alpha}\;.
\end{align*}
If $\alpha =1$, then $\beta = 1$, and so $C_{1, K} = 1$.
If $\alpha \in (1,2]$, then $\beta > 1$, hence by Lemma~\ref{lemma:nuuuu} we get
\[
    C_{\alpha, K}
=
     2^{-2} \lrb{{\left\lfloor \fracc{K}{2} \right\rfloor}^{1 - \beta} +  {\left\lceil \fracc{K}{2} \right\rceil}^{1 - \beta}}^{2/\beta}\;,
\]
and for $K$ even this means that
\begin{align*}
    C_{\alpha, K}
&= 
     2^{-2} \lrb{\lrb{\frac{K}{2}}^{1 - \beta} +  \lrb{\frac{K}{2}}^{1 - \beta}}^{2/\beta}
= 
    2^{-2} \lrb{ 2 \cdot \frac{{K}^{1 - \beta}}{2^{1 - \beta}} }^{2/\beta}
= 
    \lrb{{K}^{1 - \beta}}^{2/\beta}
= 
    {K}^{ \frac{2}{\beta} - 2}
=
    {K}^{ 1 - \alpha}
\end{align*}
whereas for $K$ odd we have 
\begin{align*}
    C_{\alpha, K}
&= 
     2^{-2} \lrb{\lrb{\frac{K-1}{2}}^{1 - \beta} +  \lrb{\frac{K+1}{2}}^{1 - \beta}}^{2/\beta}
= 
     \lrb{  2^{- \beta} \cdot \frac{ (K-1)^{1 - \beta} + (K+1)^{1 - \beta} }{2^{1 - \beta} } }^{2/\beta}
\\
&= 
     \lrb{  \frac{ (K-1)^{\frac{1 - \alpha}{3- \alpha}} + (K+1)^{\frac{1 - \alpha}{3- \alpha}} }{2} }^{3 - \alpha}
= 
     {K}^{ 1 - \alpha}
    \lrb{  \frac{ (1-\frac{1}{K})^{\frac{1 - \alpha}{3- \alpha}} + (1+\frac{1}{K})^{\frac{1 - \alpha}{3- \alpha}} }{2} }^{3 - \alpha}\;.
\end{align*}

\mycase{$\alpha >2$, $K = 2$}
Note that $\tsphere{2} = \bcb{\pm (1/2, -1/2) }$, and so 
\[
    C_{\alpha, 2}
=
    \inf \lcb{\sum_{k \in [2]} v_k^2 \gamma_k^{\alpha -2} : v \in \tsphere{2}, \gamma \in \relint\lrb{\simplex{2}}}
=   
    \inf \lcb{\sum_{k \in [2]} \frac{1}{4}\, \gamma_k^{\alpha -2} : \gamma \in \relint\lrb{\simplex{2}}}\;.
\]
Consider the function $\gamma \mapsto f(\gamma) = \frac{1}{4}(\gamma_1^{\alpha - 2} + \gamma_2^{\alpha - 2})$
which is continuous on $\simplex{2}$. By density it follows that $C_{\alpha, 2} = \min \lcb{f(\gamma) : \gamma \in \simplex{2}}$.
If $\alpha \in (2,3)$, then the function $f$ is concave on $\simplex{2}$, and so the minimum is attained on some extreme point of $\simplex{2}$, leading to
\begin{align*}
        C_{\alpha , 2} = f\brb{(1,0)} = f\brb{(0,1)}
&=
        \frac{1}{4} (1^{\alpha - 2} + 0^{\alpha - 2})
=
        2^{-2}\;.
\end{align*}
If $\alpha = 3$, then  $f(\gamma)=\frac{1}{4}$ for every $\gamma \in \simplex{2}$, thus $C_{\alpha, 2} = \frac{1}{4}=2^{-2}$. 
If $\alpha \in (3,+\infty)$, then the function $f$ is convex on $\simplex{2}$, and its symmetry about $\gamma = (1/2, 1/2)$ entails that 
\begin{align*}
    C_{\alpha , 2}
&=
    f\brb{(1/2, 1/2)}
=
        \frac{1}{4} \lrb{ \lrb{\frac{1}{2}}^{\alpha - 2} + \lrb{\frac{1}{2}}^{\alpha - 2}}
=
        2^{1 - \alpha}\;.
\end{align*}
\mycase{$\alpha >2$, $K \geq 3$}
By hypothesis on $K$, there exists some $\bar{v} \in \tsphere{K}$ such that $\bar{v}_1 = 0$, and that we can use to define the function $\gamma \mapsto g(\gamma) \coloneq \sum_{k \in [K]} \bar{v}_k^2 \gamma_k^{\alpha -2}$.
In particular,
$ C_{\alpha, K} \le g(\gamma)$
for every $\gamma \in \relint{\brb{\simplex{K}}}$ and, by continuity of $g$ over $\simplex{K}$, this inequality holds for every $\gamma \in \simplex{K}$.
We readily see that $g(\bar{\gamma}) = 0$ if $\bar{\gamma} \in \simplex{K}$
is the vertex of the simplex such that 
$\bar{\gamma}_k = 1$ if and only if $k = 1$, and 
this means that $C_{\alpha, K} = 0$.
\end{proof}
}
\section{Conclusion}
We established a \emph{sharp} Pinsker-type inequality for the Bregman divergences induced by negative Tsallis entropies, with an explicit optimal constant $C_{\alpha,K}$ for every $(\alpha,K)$ and a geometric explanation of the phase transitions (including the sharp multiclass breakdown for $\alpha>2$ and the precise dimension/parity effects for $\alpha\in(1,2]$).
Beyond its information-theoretic interest, this inequality has immediate learning-theoretic consequences.
First, it gives a tight way to convert control on the excess risk of Tsallis (power) losses into $\lno{\cdot}_1$ (total-variation) control uniformly on predictive distributions over the simplex.
Second, via the standard plug-in argument, it yields a principled route from Tsallis surrogate performance to \emph{multiclass $0$--$1$ classification} regret bounds, clarifying exactly when the conversion is dimension-free and when it can degrade with $K$.
Third, in \emph{online learning}, the result identifies the optimal $\lno{\cdot}_1$-strong convexity (curvature) of Tsallis regularizers, sharpening the constants that enter Follow-the-Regularized-Leader/Mirror-Descent analyses and making explicit how the choice of $\alpha$ controls the geometry behind the algorithm.
\section*{Acknowledgments}

TC gratefully acknowledges the support of the University of Ottawa through grant GR002837 (Start-Up Funds) and that of the Natural Sciences and Engineering Research Council of Canada (NSERC) through grants RGPIN-2023-03688 (Discovery Grants Program) and DGECR2023-00208 (Discovery Grants Program, DGECR - Discovery Launch Supplement).

RC is partially supported by the MUR PRIN grant 2022EKNE5K (Learning in Markets and Society), funded by the NextGenerationEU program within the PNRR scheme, and the the EU Horizon CL4-2021-HUMAN-01 research and innovation action under grant agreement 101070617, project ELSA (European Lighthouse on Secure and Safe AI).
\bibliographystyle{plainnat}
%\bibliography{biblio}

\appendix
\section{Pinsker Inequality: Total Variation vs. \texorpdfstring{$\lno{\cdot}_1$}{L1 Norm }}
\label{s:TVvsEll1}

\begin{lemma}[Total variation vs.\ $\lno{\cdot}_1$]
Let $K\in\mathbb{N}$ with $K \ge 2$ and let $p,q\in\simplex{K}$ be probability distributions on $[K]$.
With
\[
    \TV{p}{q} \coloneqq \sup_{C\subset [K]} \labs{\sum_{k\in C}p(k)-\sum_{k\in C}q(k)},
\qquad
    \lno{p-q}_1 \coloneqq \sum_{k=1}^K \labs{p(k)-q(k)}\;,
\]
it holds that
\[
    \TV{p}{q}
=
    \frac{1}{2}\cdot \lno{p-q}_1\;.
\]
\end{lemma}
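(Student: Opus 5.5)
The plan is to prove two inequalities, $\TV{p}{q} \le \frac{1}{2}\lno{p-q}_1$ and $\TV{p}{q} \ge \frac{1}{2}\lno{p-q}_1$, both driven by the positive/negative split of $d \coloneqq p-q$. Since $p,q\in\simplex{K}$, we have $\sum_{k\in[K]} d_k = 0$. Set $P\coloneqq\{k\in[K]: d_k>0\}$ and $N\coloneqq[K]\setminus P$. Exactly as in the opening of the proof of Lemma~\ref{lemma:nuuuu}, the zero-sum constraint gives
\[
\sum_{k\in P} d_k = -\sum_{k\in N} d_k = \frac{1}{2}\lno{p-q}_1 .
\]

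For the upper bound, fix an arbitrary $C\subset[K]$. I will write $\sum_{k\in C} d_k = \sum_{k\in C\cap P} d_k + \sum_{k\in C\cap N} d_k$. The first sum lies in $\bigl[0,\sum_{k\in P}d_k\bigr]$ and the second in $\bigl[\sum_{k\in N}d_k,0\bigr]$. Hence $\sum_{k\in C} d_k$ lies in $\bigl[-\frac{1}{2}\lno{p-q}_1,\ \frac{1}{2}\lno{p-q}_1\bigr]$, so its absolute value is at most $\frac{1}{2}\lno{p-q}_1$. Taking the supremum over $C$ gives $\TV{p}{q}\le\frac{1}{2}\lno{p-q}_1$.

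For the lower bound, I choose $C=P$. Then $\labs{\sum_{k\in P}d_k}=\frac{1}{2}\lno{p-q}_1$, so the supremum is attained and equality holds. The empty set is allowed as $C$, so the degenerate case $p=q$ (where $P=\emptyset$ and both sides are $0$) causes no trouble.

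There is no real obstacle here. The only point needing care is the identity $\sum_{k\in P}d_k=\frac{1}{2}\lno{p-q}_1$. It follows by adding and subtracting the two relations $\sum_{k\in P}d_k+\sum_{k\in N}d_k=0$ and $\sum_{k\in P}d_k-\sum_{k\in N}d_k=\lno{p-q}_1$.
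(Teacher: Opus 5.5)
Your proof is correct and follows essentially the same route as the paper: split $d=p-q$ by sign, use $\sum_k d_k=0$ to get $\sum_{k\in P}d_k=-\sum_{k\in N}d_k=\frac{1}{2}\lno{p-q}_1$, bound $\sum_{k\in C}d_k$ for arbitrary $C$, and attain the bound at $C=P$. The differences are only cosmetic. The paper puts the zero entries of $d$ in $P$ rather than $N$. It also bounds $\sup_C\sum_{k\in C}d_k$ and $\sup_C\sum_{k\in C}(-d_k)$ separately before taking the maximum, whereas you trap $\sum_{k\in C}d_k$ in a symmetric interval in one step.
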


\begin{proof}
Let $d(k)\coloneqq p(k)-q(k)$ for $k\in[K]$. Since $p,q\in\simplex{K}$,
\[
    \sum_{k=1}^K d(k)
=
    \sum_{k=1}^K p(k)-\sum_{k=1}^K q(k)
=
    1-1
=
    0\;.
\]
Define the index sets
\[
    P\coloneqq \bcb{k\in[K]: d(k)\ge 0}\;,\qquad
    N\coloneqq \bcb{k\in[K]: d(k)<0}\;.
\]
For any $C\subset[K]$ we have
\[
    \sum_{k\in C} d(k)
\le
    \sum_{k\in C\cap P} d(k)
\le
    \sum_{k\in P} d(k)\;,
\]
so $\sup_{C\subset[K]} \sum_{k\in C} d(k)\le \sum_{k\in P} d(k)$, and equality is attained by
choosing $C=P$. Hence
\[
    \sup_{C\subset[K]} \sum_{k\in C} d(k)
=
    \sum_{k\in P} d(k)\;.
\]
Similarly, applying the same argument to $-d$ yields
\[
    \sup_{C\subset[K]} \sum_{k\in C} \brb{-d(k)}
=
    \sum_{k\in N} \brb{-d(k)}\;.
\]
But since $\sum_{k=1}^K d(k)=0$, we have
\[
    \sum_{k\in P} d(k)
=
    -\sum_{k\in N} d(k)
=
    \sum_{k\in N} \brb{-d(k)}\;.
\]
Therefore,
\[
    \TV{p}{q}
=
    \sup_{C\subset[K]}\labs{\sum_{k\in C} d(k)}
=
    \max \lrb{\sup_{C \subset [K]}\sum_{k\in C} d(k),\sup_{C \subset [K]}\sum_{k\in C} \lrb{-d(k)}}
=
\sum_{k\in P} d(k)\;.
\]
On the other hand,
\[
    \lno{p-q}_1
=
    \sum_{k=1}^K \labs{d(k)}
=
    \sum_{k\in P} d(k) + \sum_{k\in N} \brb{-d(k)}
=
    2 \cdot \sum_{k\in P} d(k)\;.
\]
Combining the last two displays gives $\TV{p}{q}=\tfrac12 \cdot \lno{p-q}_1$.
\end{proof}

\section{Tsallis Losses and Their Bayes and Excess Risk}
\label{app:loss_entropy_bregman}
In this section we prove that for any $\alpha \in \R$ the Bayes and excess risk associated to Tsallis (or power) losses $\ell_\alpha$ in \eqref{eq:tsallis_losses} (with $\ell_0(q,k) \coloneqq q_k^{-1} -K+\sum_{i \in [K]} \ln q_i$, and $\ell_1(q,k) \coloneqq -\ln q_k$) are respectively the Tsallis entropy $S_\alpha$ (Definition~\ref{def:tsallis}) and their Bregman divergences $D_\alpha$ (Definition~\ref{def:bregman}).
Specifically, if $\alpha \in \R \backslash \lcb{0,1}$, for any $p,q \in \relint\lrb{\simplex{K}}$, we have that
\begin{align*}   
    \E_{Y \sim p}\lsb{ \ell_\alpha(q,Y)}
&=
    \sum_{k\in[K]} \ell_\alpha(q,k)\cdot p_k
=
    \sum_{k \in [K]} \lrb{ \frac{q^{\alpha-1}_k}{1-\alpha}+\frac{1}{\alpha}\sum_{i \in [K]}q_i^\alpha}\cdot p_k
\\
&=
    \sum_{k \in [K]} \frac{\alpha \cdot p_k \cdot q^{\alpha-1}_k+(1-\alpha)q_k^\alpha}{\alpha(1-\alpha)}
\end{align*}
which, for the Bayes risk case $p = q$, becomes
\[
    \E_{Y \sim p}\lsb{ \ell_\alpha(p,Y)}
=
    \sum_{k \in [K]} \frac{\alpha \cdot p_k \cdot p^{\alpha-1}_k+(1-\alpha)p_k^\alpha}{\alpha(1-\alpha)}
=
    \sum_{k \in [K]} \frac{p_k^\alpha}{\alpha(1-\alpha)}
=
    S_\alpha(p)\;,
\]
while the excess risk becomes
\begin{align*}    
    \E_{Y \sim p}\lsb{ \ell_\alpha(q,Y)}-\E_{Y \sim p}\lsb{ \ell_\alpha(p,Y)}
&=
    \sum_{k \in [K]} \frac{\alpha \cdot p_k \cdot q^{\alpha-1}_k+(1-\alpha)q_k^\alpha}{\alpha(1-\alpha)}
    -
    \sum_{k \in [K]} \frac{p_k^\alpha}{\alpha(1-\alpha)}
\\
&=
    \frac{1}{\alpha} \cdot \sum_{k \in [K]} \frac{p_k^\alpha + (\alpha-1)q_k^\alpha-\alpha p_kq_k^{\alpha-1}}{\alpha-1}
=
    d_\alpha(p \Vert q)
=
    D_\alpha(p \Vert q)\;,
\end{align*} 
where $d_\alpha$ is the $\beta$-divergence for $\beta = \alpha$ as defined in Definition~\ref{def:beta}, and the last inequality follows from Remark~\ref{r:beta-divergences}.

For the case $\alpha = 0$, if $p,q \in \relint\lrb{\simplex{K}}$, we have
\begin{align*}    
    \E_{Y \sim p}\lsb{ \ell_0(q,Y)}
&=
    \sum_{k\in[K]} \ell_0(q,k)\cdot p_k
=
    \sum_{k\in[K]} \lrb{q_k^{-1} - K + \sum_{i \in [K]} \ln q_i}\cdot p_k
=
    \sum_{k\in[K]} \lrb{\frac{p_k}{q_k} -1 +  \ln q_k}
\end{align*}
which, for the Bayes risk case $p = q$, becomes
\[
    \E_{Y \sim p}\lsb{ \ell_0(p,Y)}
=
    \sum_{k\in[K]} \lrb{\frac{p_k}{p_k}  -1 + \ln p_k}
=
    \sum_{k\in[K]} \ln p_k
=
    S_0(p)
\]
while the excess risk becomes
\begin{align*}
    \E_{Y \sim p}\lsb{ \ell_0(q,Y)}-\E_{Y \sim p}\lsb{ \ell_0(p,Y)}
&=
    \sum_{k\in[K]} \lrb{\frac{p_k}{q_k} -1 +  \ln q_k}-\sum_{k\in[K]} \ln p_k
\\
&=
    \sum_{k=1}^K
    \lrb{ \frac{p_k}{q_k} -\ln\frac{p_k}{q_k} -1 }
=
    d_0(p\Vert q)
=
    D_0(p\Vert q)\;,
\end{align*}
where $d_0$ is the $\beta$-divergence for $\beta = 0$ (or Itakura-Saito divergence) as defined in Definition~\ref{def:beta}, and the last inequality follows from Remark~\ref{r:beta-divergences}.

For the case $\alpha = 1$, if $p,q \in \relint\lrb{\simplex{K}}$, we have
\begin{align*}    
    \E_{Y \sim p}\lsb{ \ell_1(q,Y)}
&=
    \sum_{k\in[K]} \ell_1(q,k)\cdot p_k
=
    \sum_{k\in[K]} \lrb{ - \ln q_k}\cdot p_k
=
    -\sum_{k\in[K]} p_k \cdot \ln q_k\;,
\end{align*}
which, for the Bayes risk case $p = q$, becomes
\[
    \E_{Y \sim p}\lsb{ \ell_1(p,Y)}
=
    -\sum_{k\in[K]} p_k \cdot \ln p_k
=
    S_1(p)
\]
while the excess risk becomes
\begin{align*}
    \E_{Y \sim p}\lsb{ \ell_1(q,Y)}-\E_{Y \sim p}\lsb{ \ell_1(p,Y)}
&=
    -\sum_{k\in[K]} p_k \cdot \ln q_k + \sum_{k\in[K]} p_k \cdot \ln p_k
\\
&=
    \sum_{k \in [K]} p_k \cdot \ln \frac{p_k}{q_k}
=
    \KL(p\Vert q)
=
    D_1(p\Vert q)\;,
\end{align*}
where $\KL$ is the Kullback-Leibler divergence, and the last inequality follows from Remark~\ref{r:kullback-leibler}.
\section[0--1 Loss Regret is Controlled by L1 Distance]{$0$--$1$ Loss Regret is Controlled by $\lno{\cdot}_1$ distance}
\label{app:0-1}
\begin{lemma}[$0$-$1$ loss regret is controlled by $\lno{\cdot}_1$ distance]
\label{lem:01_from_l1}
For any $p,q\in\simplex{K}$, select $k^\star_p\in\argmax_{k \in [K]} p_k$ and $k^\star_q \in\argmax_k q_k$, deterministically.
Then
\begin{align*}
    \Pb_{Y \sim p}\lsb{Y \neq k^\star_q}
    -
    \Pb_{Y \sim p}\lsb{Y \neq k^\star_p}
\le
    \lno{p-q}_1\;.
\end{align*}
\end{lemma}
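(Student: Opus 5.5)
The argument compares $p$ with $q$ coordinate by coordinate at the two argmax indices and then absorbs the difference into the $\ell_1$ norm. Since $\Pb_{Y\sim p}[Y\neq k] = 1 - p_k$, the left-hand side rewrites as
\[
    \Pb_{Y \sim p}\lsb{Y \neq k^\star_q}-\Pb_{Y \sim p}\lsb{Y \neq k^\star_p}
=
    p_{k^\star_p} - p_{k^\star_q}\;.
\]
So it suffices to bound $p_{k^\star_p} - p_{k^\star_q}$ by $\lno{p-q}_1$.

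If $k^\star_p = k^\star_q$, the difference is $0$ and there is nothing to prove. Otherwise, I insert and subtract the corresponding coordinates of $q$:
\[
    p_{k^\star_p} - p_{k^\star_q}
=
    \brb{p_{k^\star_p} - q_{k^\star_p}}
    + \brb{q_{k^\star_p} - q_{k^\star_q}}
    + \brb{q_{k^\star_q} - p_{k^\star_q}}\;.
\]
The middle term is $\le 0$ because $k^\star_q$ maximizes $q$. Each of the two outer terms is at most the absolute value of a distinct coordinate of $p-q$, since $k^\star_p \neq k^\star_q$ in this case. Hence their sum is at most
\[
    \labs{p_{k^\star_p}-q_{k^\star_p}} + \labs{p_{k^\star_q}-q_{k^\star_q}} \le \lno{p-q}_1\;.
\]

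There is no real obstacle here. The one point to be careful about is that the two outer terms involve different indices, so that they are bounded by two different summands of $\lno{p-q}_1$ rather than twice the same one. The same computation actually gives the sharper bound $2\TV{p}{q} = \lno{p-q}_1$, consistent with the lemma in \Cref{s:TVvsEll1}, but that refinement is not needed. The deterministic choice of argmax plays no role beyond making $k^\star_q$ well defined.
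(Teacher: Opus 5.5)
Your proof is correct and essentially identical to the paper's. Both rewrite the left-hand side as $p_{k^\star_p}-p_{k^\star_q}$, insert and subtract the $q$-coordinates, drop the nonpositive middle term $q_{k^\star_p}-q_{k^\star_q}$, and bound the two outer terms by distinct summands of $\lno{p-q}_1$. One small correction to your closing aside: $2\TV{p}{q}$ equals $\lno{p-q}_1$ (by the lemma in \Cref{s:TVvsEll1}), so it is not a sharper bound, though this has no effect on the argument.
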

\begin{proof}
If $k^\star_q = k^\star_p$ there is nothing to prove.
Assume then $k^\star_q \neq k^\star_p$.
Then, we have
\begin{align*}
    \Pb_{Y \sim p}\lsb{Y \neq k^\star_q}
    -
    \Pb_{Y \sim p}\lsb{Y \neq k^\star_p}
&=
    1-p_{k_q^\star}-(1-p_{k_p^\star})
=
    p_{k_p^\star}-p_{k_q^\star}
\\
&=
    p_{k_p^\star}-q_{k_p^\star}
    +\underbrace{q_{k_p^\star}-q_{k_q^\star}}_{\le 0}
    +q_{k_q^\star}-p_{k_q^\star}
\\
&\le
    p_{k_p^\star}-q_{k_p^\star}
    +q_{k_q^\star}-p_{k_q^\star}
\\
&=
    \labs{p_{k_p^\star}-q_{k_p^\star}}
    +
    \labs{p_{k_q^\star}-q_{k_q^\star}}
\le
    \sum_{k=1}^K \labs{p_k-q_k}
=
    \lno{p-q}_1\;.
\end{align*}
\end{proof}
\section{Proof of the Remarks}
\label{app:remarks}
{\renewcommand{\proofname}{Proof of Remark \ref{r:beta-divergences}}\begin{proof}
For any $\alpha \in \R$, the Bregman divergence $D_\alpha$ of $-S_\alpha$ coincides with the \emph{$\beta$-divergence} $d_\beta$, with $\beta = \alpha$.
Indeed, for any $p,q \in (0,+\infty)^K$ we have
\begin{align*}
    D_{\alpha}(p\Vert q)
&=
    -\frac{\sum_{k \in[K]} p_k^\alpha}{\alpha(1-\alpha)} + \frac{\sum_{k \in[K]} q_k^\alpha}{\alpha(1-\alpha)}  - \sum_{k \in[K]} \frac{1}{\alpha-1} q_k^{\alpha-1}(p_k-q_k)
\\
&=
    \frac{1}{\alpha} \cdot \sum_{k \in [K]} \frac{p_k^\alpha + (\alpha-1)q_k^\alpha-\alpha p_kq_k^{\alpha-1}}{\alpha-1}
=
    d_\alpha(p \Vert q)\;, \qquad \alpha \notin \{0,1\}
\end{align*}
while for $\alpha = 0$ we have
\[
    D_{0}(p \Vert q)
=
    -\sum_{k \in [K]} \ln p_k
    +\sum_{k \in [K]} \ln q_k
    +\sum_{k \in [K]} \frac{p_k-q_k}{q_k}
=
    \sum_{k=1}^K
    \lrb{ \frac{p_k}{q_k} -\ln\frac{p_k}{q_k} -1 }
=
    d_0(p\Vert q)\;,
\]
which is also known as the Itakura-Saito divergence, and for $\alpha = 1$ we have
\begin{align*}    
    D_{1}(p\Vert q)
&=
    \sum_{k \in [K]} p_k \ln p_k - \sum_{k \in [K]} q_k \ln q_k - \sum_{k \in [K]} (1+\ln q_k)(p_k-q_k)
\\
&=
    \sum_{k \in [K]} \lrb{ p_k \cdot\ln\frac{p_k}{q_k} - p_k + q_k}
=
    d_1(p\Vert q)\;,
\end{align*}

Now, we prove the continuity statement.
The only two non-trivial cases are $\alpha \in \lcb{0,1}$.
For the case $\alpha = 0$, for any $p,q \in (0,+\infty)^K$, noticing that the limit is an indeterminate $0/0$ form, we can apply l'Hôpital's rule $(\mathrm{H})$ to obtain
\begin{align*}
    \lim_{\alpha \to 0}D_\alpha(p \Vert q)
&\overset{\phantom{(\mathrm{H})}}{=}
    \lim_{\alpha \to 0}\frac{1}{\alpha} \cdot \sum_{k \in [K]} \frac{p_k^\alpha + (\alpha-1)q_k^\alpha-\alpha p_kq_k^{\alpha-1}}{\alpha-1}
\\
&\overset{\phantom{(\mathrm{H})}}{=}
    \sum_{k \in [K]} \lim_{\alpha \to 0} \frac{-p_k^\alpha + (1-\alpha)q_k^\alpha+\alpha p_kq_k^{\alpha-1}}{\alpha}
\\
&\overset{(\mathrm{H})}{=}
    \sum_{k \in [K]} \lim_{\alpha \to 0} \lrb{-p_k^\alpha \ln p_k -q_k^\alpha+(1-\alpha)q_k^\alpha \ln q_k+p_kq_k^{\alpha-1}+\alpha p_kq_k^{\alpha-1} \ln q_k}
\\
&\overset{\phantom{(\mathrm{H})}}{=}
    \sum_{k \in [K]} \lrb{ -\ln p_k -1+\ln q_k+p_kq_k^{-1}}
\\
&\overset{\phantom{(\mathrm{H})}}{=}
    \sum_{k \in [K]}\lrb{ \frac{p_k}{q_k} -\ln\frac{p_k}{q_k} -1 }
=
    D_0(p\Vert q)\;.
\end{align*}
For the case $\alpha = 1$, for any $p,q \in (0,+\infty)^K$, noticing that the limit is again an indeterminate $0/0$ form,
we can apply l'Hôpital's rule $(\mathrm{H})$ to obtain
\begin{align*}
    \lim_{\alpha \to 1}D_\alpha(p \Vert q)
&\overset{\phantom{(\mathrm{H})}}{=}
    \lim_{\alpha \to 1}\frac{1}{\alpha} \cdot \sum_{k \in [K]} \frac{p_k^\alpha + (\alpha-1)q_k^\alpha-\alpha p_kq_k^{\alpha-1}}{\alpha-1}
\\
&\overset{\phantom{(\mathrm{H})}}{=}
    \sum_{k \in [K]} \lim_{\alpha \to 1} \frac{p_k^\alpha + (\alpha-1)q_k^\alpha-\alpha p_kq_k^{\alpha-1}}{\alpha-1}
\\
&\overset{(\mathrm{H})}{=}
    \sum_{k \in [K]} \lim_{\alpha \to 1} \lrb{p_k^\alpha \ln p_k +q_k^\alpha+(\alpha-1)q_k^\alpha \ln q_k-p_kq_k^{\alpha-1}-\alpha p_kq_k^{\alpha-1} \ln q_k}
\\
&\overset{\phantom{(\mathrm{H})}}{=}
    \sum_{k \in [K]} \lrb{p_k \ln p_k +q_k-p_k-p_k \ln q_k}
\\
&\overset{\phantom{(\mathrm{H})}}{=}
    \sum_{k \in [K]} \lrb{ p_k \cdot\ln\frac{p_k}{q_k} - p_k + q_k}
=
    D_1(p\Vert q)\;.
\end{align*}
\end{proof}
}
{\renewcommand{\proofname}{Proof of Remark \ref{rem:nopinsker-orthant}}\begin{proof}
For $\alpha \neq 2$ the remark is equivalent to showing that
\[
    \inf_{\substack{p,q\in(0,+\infty)^K\\ p\neq q}}\ \frac{2D_\alpha(p\Vert q)}{\lno{p-q}_1^2}
=
    0\;.
\]
Let $\mathbf{1}\in\R^K$ be the all-ones vector and $e_1$ the first canonical basis vector.
Using a second-order Taylor expansion of $-S_\alpha$ at $q$ and the Hessian expression
$
    H(-S_\alpha)(q)
=
    \mathrm{diag}(q_1^{\alpha-2},\dots,q_K^{\alpha-2})
$
(Remark~\ref{remark:hessian}),
we obtain for $q\in(0,+\infty)^K$ and $\e\to 0^+$ that
\[
    D_\alpha(q+\e \cdot e_1 \Vert q)=\frac{\e^2}{2}\cdot q_1^{\alpha-2}+o(\e^2)\;.
\]
Now take $q(t)\coloneqq t\cdot\mathbf{1}$, so $q_1(t)=t$, and note that $\lno{ \brb{q(t)+\e \cdot e_1}-q(t)}_1=\e$.
Hence
\[
    \varphi_\alpha(\e,t)
\coloneqq
    \frac{2D_\alpha\brb{q(t)+\e \cdot e_1 \Vert q(t)}}{\lno{\brb{q(t)+\e \cdot e_1}-q(t)}_1^2}
=
    t^{\alpha-2}+o(1)\;,\qquad \textrm{as $\e \to 0^+$}\;.
\]
If $\alpha<2$, set $\e \coloneqq \frac{1}{t}$ and send $t\to+\infty$, so that $\varphi_\alpha\lrb{\frac{1}{t},t} \to 0$.
If $\alpha>2$, set $\e \coloneqq t$ and send $t\to0^+$, so that  $\varphi_\alpha\lrb{t,t} \to 0$.
This proves the claim.

\noindent
In contrast, when $\alpha=2$, as we already noted in the footnote to Remark~\ref{r:euclidean}, we have that
\[
    D_2(p\Vert q)
=
    \frac12 \cdot \lno{p-q}_2^2 \qquad \text{for all }p,q\in(0,+\infty)^K\;.
\]
Therefore, since from Jensen's inequality we have, for any $x \in \R^K$,
\[
    \lno{x}_2
=
    \lrb{\sum_{k \in [K]} x_k^2}^{1/2}
=
    \sqrt{K} \cdot\lrb{\frac{1}{K}\sum_{k \in [K]} x_k^2}^{1/2}
\ge
    \frac{1}{\sqrt{K}}\sum_{k=1}^K|x_k|
=
    \frac{1}{\sqrt{K}}\cdot \lno{x}_1\;,
\]
we get as well
\[
    D_2(p\Vert q)
=
    \frac12 \cdot \lno{p-q}_2^2
\ge
    \frac{1}{2K}\cdot \lno{p-q}_1^2,
\qquad
    p,q\in(0,+\infty)^K\;,
\]
with the equality attained at $q \coloneqq p + c\cdot \mathbf{1}$, for any $c >0$ and any $p \in (0,+\infty)^K$.
\end{proof}
}
{\renewcommand{\proofname}{Proof of Remark \ref{remark:hessian}}
\begin{proof}
If $\alpha \in (-\infty,0) \cup (0,1) \cup (1,+\infty)$, for any $i,j \in [K]$, we have that
\[
    \partial_i S_\alpha(p)=-\frac{1}{\alpha-1}p_i^{\alpha-1},
\]
from which it immediately follows that, $\partial_j\partial_i S_\alpha = 0$ whenever $j \neq i$, while
\[
    \partial_{ii}^2 S_\alpha(p)
=
    \partial_i \lrb{-\frac{1}{\alpha-1}p_i^{\alpha-1}}
=
    -p_i^{\alpha-2}\;.
\]

If $\alpha = 0$, we have $S_0(p)=\sum_{k\in[K]}\ln p_k$, so for any $i\in[K]$,
\[
\partial_i S_0(p)=\frac{1}{p_i},\qquad
\partial^2_{ii}S_0(p)=-\frac{1}{p_i^2},\qquad
\partial^2_{ij}S_0(p)=0\ \ (i\neq j).
\]
Since $p_i^{-2}=p_i^{\alpha-2}$ when $\alpha=0$, this matches the claimed formula.

If $\alpha =1$, we have $S_1(p)=-\sum_{k\in[K]}p_k\ln p_k$, so for any $i\in[K]$,
\[
\partial_i S_1(p)=-(1+\ln p_i),\qquad
\partial^2_{ii}S_1(p)=-\frac{1}{p_i},\qquad
\partial^2_{ij}S_1(p)=0\ \ (i\neq j).
\]
Since $p_i^{-1}=p_i^{\alpha-2}$ when $\alpha=1$, this again matches the claimed formula.

Putting the cases together, for every $\alpha \in \R$ and $p\in(0,+\infty)^K$,
\[
\partial^2_{ij} S_\alpha(p)= -\,p_i^{\alpha-2}\,\I\{i=j\},
\]
and thus $H S_\alpha(p)=-\mathrm{diag}(p_1^{\alpha-2},\dots,p_K^{\alpha-2})$.
\end{proof}
}
{\renewcommand{\proofname}{Proof of Remark \ref{remark:const-is-best}}
\begin{proof}
Given $C \geq 0$ and $p$, $q \in  \relint\lrb{\simplex{K}}$, the inequality
\begin{equation}\label{eq:pinsker-C}
    D_\alpha\brb{p\Vert q}
\ge
    \frac{C}{2} \cdot \lno{p - q}_1^2    
\end{equation}
is trivial if $p = q$, whereas for $p \neq q$ it is verified if and only if
\[
    2 \cdot\frac{D_\alpha\brb{p\Vert q}}{\lno{p - q}_1^2}
\ge
    C\;.
\]
From this we see that $C \geq 0$ satisfies \eqref{eq:pinsker-C} for \emph{every} $p$, $q \in  \relint\lrb{\simplex{K}}$ if and only if 
\begin{equation}\label{eq:ratio-inf}    
    2 \cdot \inf \lcb{\frac{D_\alpha\brb{p\Vert q}}{\lno{p - q}_1^2} : p , q \in  \relint\lrb{\simplex{K}}, p \neq q}
\ge
    C\;,
\end{equation}
and the largest such $C$ equals the LHS of \eqref{eq:ratio-inf}.
\end{proof}
}
\subsection[Proof of Remark Sigma: Estimating the Constant sigma(alpha,K)]{Proof of Remark \ref{rem:sigma}: Estimating the Constant $\sigma_{\alpha,K}$}\label{app:sigma}
\begin{proposition}\label{prop:sigma}
Let $\alpha \in (1,2)$ and $K \geq 3$.
Then
 \[
     \sigma_{\alpha, K}
\coloneq
    \lrb{  \frac{ (1-\frac{1}{K})^{\frac{1 - \alpha}{3- \alpha}} + (1+\frac{1}{K})^{\frac{1 - \alpha}{3- \alpha}} }{2} }^{3 - \alpha}\;
\]
satisfies
\begin{equation}\label{eq:sigma-bounds}    
    1 + \frac{\alpha - 1}{3- \alpha} \cdot 
        \frac{1}{K^2} 
\le
    \sigma_{\alpha, K}
\le
    1 + \frac{7(\alpha - 1)}{6(3- \alpha)} \cdot 
        \frac{1}{K^2}\;.
\end{equation}
\end{proposition}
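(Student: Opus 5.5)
Write $r \coloneqq \frac{1-\alpha}{3-\alpha}$, $s \coloneqq 3-\alpha$ and $x \coloneqq \frac1K$. For $\alpha\in(1,2)$ and $K\ge 3$ we have $r\in(-1,0)$, $s\in(1,2)$ and $x\le \frac13$. Two identities will be used throughout: $rs = 1-\alpha$ and $1-r = \frac{2}{3-\alpha}$. With this notation $\sigma_{\alpha,K} = m^s$, where
\[
m \coloneqq \frac{(1-x)^r+(1+x)^r}{2} = \sum_{j\ge 0}\binom{r}{2j}x^{2j}\;,
\]
since averaging the two binomial series cancels the odd powers. The plan is to bound $m-1$ above and below by multiples of $x^2$, and then transfer these bounds to $m^s$ by one-variable inequalities in $s$.

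\textbf{Step 1 (sign and decay of the coefficients).} Since $r<0$, the numerator $r(r-1)\cdots(r-2j+1)$ is a product of $2j$ negative factors, so every $\binom{r}{2j}$ is positive. Hence
\[
m \ge 1 + c\,x^2\;, \qquad c\coloneqq \frac{r(r-1)}{2}\;.
\]
For the upper bound, consider the ratio of consecutive even coefficients:
\[
\frac{\binom{r}{2j+2}}{\binom{r}{2j}} = \frac{(2j-r)(2j+1-r)}{(2j+1)(2j+2)}\;.
\]
Because $|r|<1$, this ratio is at most $1$. It follows that every coefficient with $j\ge1$ is at most $c$, and therefore
\[
m-1 \le \frac{c\,x^2}{1-x^2} \le \frac98\, c\,x^2\;,
\]
where the last step uses $x\le\frac13$. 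Finally, $sc = \frac{s\,r(r-1)}{2} = \frac{(1-\alpha)(r-1)}{2} = \frac{\alpha-1}{3-\alpha} \eqqcolon A$, and $A\in(0,1)$ on the range considered.

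\textbf{Step 2 (lower bound).} Since $s\ge1$, Bernoulli's inequality gives
\[
\sigma_{\alpha,K} = m^s \ge (1+cx^2)^s \ge 1 + s c x^2 = 1 + A x^2\;.
\]
This is exactly the left inequality in \eqref{eq:sigma-bounds}.

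\textbf{Step 3 (upper bound).} Apply Taylor's formula to $t\mapsto t^s$ around $t=1$ with Lagrange remainder:
\[
m^s = 1 + s(m-1) + \frac{s(s-1)}{2}\,\xi^{s-2}(m-1)^2 \quad \text{for some } \xi\in[1,m]\;.
\]
Since $s\le 2$ and $\xi\ge1$, we have $\xi^{s-2}\le 1$. Using $c = A/s$, $A\le 1$ and $x\le\frac13$, this yields
\[
\sigma_{\alpha,K} - 1 \le \frac98 A x^2 + \frac{s(s-1)}{2}\cdot\frac{81}{64}\cdot\frac{A^2}{s^2}\,x^4 \le A x^2\Bigl(\frac98 + \frac{s-1}{2s}\cdot\frac{81}{64}\cdot\frac19\Bigr)\;.
\]
It remains to check that the bracket is at most $\frac76$. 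Since $\frac{s-1}{2s}\le\frac14$, the second term in the bracket is at most $\frac{9}{256}<\frac1{24} = \frac76-\frac98$, so the bracket is indeed at most $\frac76$. This gives the right inequality in \eqref{eq:sigma-bounds}.

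\textbf{Main obstacle.} The delicate point is Step 3. The available slack between $\frac98$ and $\frac76$ is only $\frac1{24}$, and the argument fits inside it only because three facts are used together: the coefficient-ratio bound (which depends on $|r|<1$), the restriction $K\ge3$ (which gives $x\le\frac13$), and the constraint $s\le2$ (which makes $\xi^{s-2}\le1$ in the remainder term). If the numerical check were to fail, the first thing I would try is sharpening the ratio bound to $\frac{(2-r)(3-r)}{12}$ at $j=1$ before summing the tail.
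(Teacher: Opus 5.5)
Your proof is correct and follows the paper's argument essentially step for step: the same even-power binomial expansion with positive, nonincreasing coefficients, the same geometric-tail factor $\frac98$ coming from $K\ge 3$, Bernoulli for the lower bound, and in the end the same constant $\frac{297}{256}\le\frac76$. The only difference is cosmetic and lies in the upper bound. You control the quadratic term through the Lagrange remainder of $t\mapsto t^{s}$, using $s\le 2$. The paper instead sets $z=m-1$, writes $(1+z)^{3-\alpha}=(1+z)(1+z)^{2-\alpha}$, and applies Bernoulli to the second factor.
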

\begin{proof}
Define $\nu \coloneq (1 - \alpha)/(3- \alpha)$ and note that $\nu \in (-1,0)$.
Next, define
\[
    z(\alpha, K)
\coloneq
    \frac{ (1-\frac{1}{K})^{\nu} + (1+\frac{1}{K})^{\nu} }{2} - 1\;,
\]
which, by considering the binomial series for $(1 + x)^\nu$ and $x = \pm 1/K$, can be expanded as
\begin{equation}\label{eq:z-expasion}
    z(\alpha, K)
=
    \sum_{n=1}^{\infty} \binom{\nu}{2n} \frac{1}{K^{2n}}\;,    
\end{equation}
where the coefficients are nonnegative, as shown by the recurrence 
\begin{equation}\label{eq:binom-recurrence}    
    \binom{\nu}{2n+2} 
= 
    \binom{\nu}{2n}
    \cdot
    \frac{(\nu- 2n)(\nu - 2n - 1)}{ (2n + 1)(2n + 2) }\;.
\end{equation}
In particular, the function $K \mapsto z(\alpha, K)$ is an increasing function of $1/K^2$, and 
\[
    z(\alpha, K)
\ge
    \binom{\nu}{2}\cdot  \frac{1}{K^{2}}
=
    \frac{\alpha - 1}{(3- \alpha)^2} \cdot \frac{1}{K^2}\,.
\]
This inequality, together with Bernoulli's inequality, yields
\[
    1 +  \frac{\alpha - 1}{(3- \alpha)} \cdot \frac{1}{K^2}
\le
    1 + (3-\alpha)\cdot z(\alpha, K)
\le
    (1 + z(\alpha, K))^{3-\alpha}
=   
    \sigma_{\alpha, K}\;,
\]
and this proves the lower bound in \eqref{eq:sigma-bounds}.
For the upper bound, first note that $K \ge 3$ and that
$\binom{\nu}{2n}$ is decreasing in $n$ by \eqref{eq:binom-recurrence}, hence
\[
    z(\alpha, K)
\le
    \binom{\nu}{2}\cdot 
    \sum_{n=1}^{\infty}  \frac{1}{K^{2n}}
\le
    \binom{\nu}{2}\cdot 
    \frac{1}{K^2}\cdot 
    \sum_{n=0}^{\infty}  \frac{1}{3^{2n}}
=
    \frac{9(\alpha - 1)}{8(3- \alpha)^2}
    \cdot \frac{1}{K^2}\,.
\]
By Bernoulli's inequality,
\begin{align*}
    \sigma_{\alpha, K}
&=
    \lrb{1 + z(\alpha, K)}
    \cdot (1 + z(\alpha, K))^{2-\alpha}
\\
&\le
    (1 + z(\alpha, K))\cdot (1 + (2- \alpha)\cdot z(\alpha, K))
\\
&=
    1 + (3 - \alpha)\cdot  z(\alpha, K) \cdot 
        \lrb{1 +  \frac{2 - \alpha}{3 - \alpha} \cdot z(\alpha, K) }
\;,
\end{align*}
hence
\begin{align*}
    \sigma_{\alpha, K}
&\le
    1 + (3 - \alpha) \cdot z(\alpha, K) \cdot 
        \lrb{1 +  \frac{2 - \alpha}{3 - \alpha} \cdot z(\alpha, 3)}
\\
&\le
    1 +  \frac{9(\alpha - 1)}{8(3- \alpha)}
           \cdot \frac{1}{K^2}\cdot 
        \lrb{1 +  (2 - \alpha)\frac{9(\alpha - 1)}{8(3- \alpha)^3} \frac{1}{3^2}}
\\
&=
    1 +  \frac{9(\alpha - 1)}{8(3- \alpha)} 
        \cdot \lrb{1 +  \frac{(2-\alpha)(\alpha - 1)}{8(3- \alpha)^3}}
        \cdot \frac{1}{K^2}
\\
&\le
    1 +  \frac{9(\alpha - 1)}{8(3- \alpha)} 
        \cdot \lrb{1 +  \frac{1}{32}}
        \cdot \frac{1}{K^2}
\\
&=
    1 +  \frac{297(\alpha - 1)}{256(3- \alpha)} 
        \cdot \frac{1}{K^2}\;,
\end{align*}
using that $1/32$ is an upper bound of $(2-\alpha)(\alpha - 1)/\brb{8(3- \alpha)^3}$ for $\alpha \in (1, 2)$.
Noticing that we have $\frac{297}{256} \le \frac{7}{6}$ we obtain the conclusion.
\end{proof}
The inequality given in Remark~\ref{rem:sigma} is based on Proposition~\ref{prop:sigma} and on the observation that $0 \le \frac{\alpha - 1}{3-\alpha} \le 1$ for $\alpha \in (1,2]$.
\section{Nonexistence of Pinsker-Type Inequality for \texorpdfstring{$\alpha >2$ and $K \geq 3$} {alpha > 2 and K >= 3} }\label{app:nopinsker}
It is possible to show directly that for $\alpha >2$ and  $K \geq 3$ the divergence $D_\alpha$ does not admit a Pinsker-type inequality, i.e., that it is not possible to find a \emph{positive} constant $C$ such that $D_\alpha(p\Vert q) \ge C\cdot 
    \lno{ p-q }_1^2$ for every $p$, $q \in \relint{\lrb{\simplex{K}}}$.
\begin{theorem}
\label{t:K_big_alpha_big}
Let $K \geq 3$ be an integer. For every $\alpha \in (2, +\infty)$ and every $\e >0$, there exist $p,q \in \relint\lrb{\simplex{K}}$ such that
\[
    D_\alpha(p\Vert q)
<
    \e \cdot \lno{ p-q }_1^2\;.
\]
\end{theorem}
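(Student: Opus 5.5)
Since the paper promises an explicit one-parameter family, I will not invoke Proposition~\ref{prop:second_order} but construct $p(t),q(t)$ directly. Fix $\alpha>2$, $K\ge 3$, and take $t\in(0,1/4)$. All mass sits near vertex $1$, and the perturbation moves mass between coordinates $2$ and $3$, where the Hessian entries $\gamma_k^{\alpha-2}$ vanish. Concretely, set $\delta\coloneqq t^2$ (so that $\delta\le t/4$) and
\[
    q_1 = 1-2t-(K-3)\delta,\quad q_2=q_3=t,\quad q_k=\delta\ (k\ge 4),
\]
with $p\coloneqq q+\tfrac t2(e_2-e_3)$, i.e.\ $p_2=\tfrac{3t}{2}$, $p_3=\tfrac t2$. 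Both points lie in $\relint\lrb{\simplex{K}}$ for small $t$, and $\lno{p-q}_1=t$.

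The key step is computing $D_\alpha(p\Vert q)$. Since $p$ and $q$ differ only in coordinates $2$ and $3$, the decomposable formula of Remark~\ref{r:beta-divergences} leaves only those two summands. Each summand is homogeneous of degree $\alpha$ in $(p_k,q_k)$, being $\tfrac{1}{\alpha(\alpha-1)}\brb{p_k^\alpha+(\alpha-1)q_k^\alpha-\alpha p_kq_k^{\alpha-1}}$. Hence
\[
    D_\alpha(p\Vert q)= t^\alpha\cdot c_\alpha,
\]
where
\[
    c_\alpha\coloneqq \tfrac{1}{\alpha(\alpha-1)}\Bigl[\bigl(\tfrac32\bigr)^\alpha+\bigl(\tfrac12\bigr)^\alpha+2(\alpha-1)-\alpha\bigl(\tfrac32+\tfrac12\bigr)\Bigr]
\]
is a finite constant depending only on $\alpha$. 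It is nonnegative by Remark~\ref{r:breg-strict-convexity}.

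It follows that
\[
    \frac{D_\alpha(p\Vert q)}{\lno{p-q}_1^2}=c_\alpha\, t^{\alpha-2}\longrightarrow 0 \quad\text{as } t\to0^+,
\]
because $\alpha>2$. Given $\e>0$, choosing $t$ small enough that $c_\alpha t^{\alpha-2}<\e$ gives the claim. The assumption $K\ge3$ is used exactly to have two coordinates besides the dominant one, so that the $\lno{\cdot}_1$-unit direction avoids the heavy coordinate. This matches the direction $\bar v$ with $\bar v_1=0$ from the main proof.

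The only potential obstacle is the scaling step. One needs the divergence to scale like $t^\alpha$ rather than $t^2$. This holds because both points shrink toward $0$ in the active coordinates proportionally with the displacement, so homogeneity applies exactly; no Taylor remainder has to be controlled. For $K=3$ there are no coordinates $k\ge4$, and the construction is unchanged.
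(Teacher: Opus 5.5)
Your proof is correct and is essentially the paper's argument: an explicit family concentrated near the first vertex, with $p-q$ supported on coordinates $2,3$ at scale $t$, so that $D_\alpha(p\Vert q)=c_\alpha t^\alpha$ exactly while $\lno{p-q}_1=t$. The only cosmetic difference is how the $t^\alpha$ scaling is obtained. The paper swaps $\tfrac34 t$ and $\tfrac54 t$ between coordinates $2$ and $3$, so that $S_\alpha(p)=S_\alpha(q)$ and only the gradient term survives; you instead use the degree-$\alpha$ homogeneity of each summand. Your choice $\delta=t^2$ is immaterial, since those coordinates cancel; just take $t$ small enough that $q_1>0$ when $K$ is large.
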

\begin{proof}
Define 
\begin{align*}   
    p_k(t) 
&
\coloneq 
    \begin{cases}
            1 - (K-1) \cdot t
        &
            \text{for $k = 1$}
        \\
            \frac{3}{4}t 
        &
            \text{for $k = 2$}
        \\
            \frac{5}{4}t 
        &
            \text{for $k = 3$}
        \\
            t
        &
            \text{for $3 <k \le K$}
    \end{cases}
&
    q_k(t)
& \coloneq 
    \begin{cases}
            1 - (K-1) \cdot t
        &
            \text{for $k = 1$}
        \\
            \frac{5}{4}t 
        &
            \text{for $k = 2$}
        \\
            \frac{3}{4}t 
        &
            \text{for $k = 3$}
        \\
            t
        &
            \text{for $3 <k \le K$.}
    \end{cases}
\end{align*}

For $0 < t <\frac{1}{K-1}$, both $p \coloneq p(t)$ and $q \coloneqq q(t)$ are in $\relint\lrb{\simplex{K}}$.
By construction, $S_{\alpha}\brb{p(t)} = S_{\alpha}\brb{q(t)}$, hence
\begin{align*}
    D_\alpha\brb{p\Vert q}
&=
    \langle\nabla S_{\alpha}(q),p-q\rangle
\\
&=
    - \sum_{k \in[K]} \frac{1}{\alpha-1} \cdot  q_k^{\alpha-1} \lrb{p_k-q_k}
\\
&=
    - \frac{1}{\alpha-1} \lrb{ 
            \lrb{\frac{5}{4} \cdot  t }^{\alpha-1} \cdot  \lrb{- \frac{1}{2} \cdot t}
        +
            \lrb{\frac{3}{4} \cdot t }^{\alpha-1} \cdot \lrb{\frac{1}{2} \cdot t}
        }
\\
&=
    t^{\alpha} 
    \cdot 
    \frac{1}{2(\alpha-1)}
    \cdot
    \frac{5 ^{\alpha-1} -   {3}^{\alpha - 1}}{4^{\alpha-1}}\;,
\end{align*}
and note that $\lno{p - q}_1 = t$, thus
\begin{align*}
    \frac{D_\alpha\brb{p\Vert q}}{ \lno{p - q}_1^2}
&
=
    t^{\alpha - 2} 
\cdot 
    \frac{1}{2(\alpha-1)} 
\cdot 
    \frac{5 ^{\alpha-1} -   {3}^{\alpha - 1}}{4^{\alpha-1}}\;.
\end{align*}
By hypothesis on $\alpha$ we have $t^{\alpha - 2} \to 0$ as $t \to 0^+$, thus the thesis follows by considering, for every $\e >0$, a sufficiently small $t = t(\e)$.
\end{proof}
{\renewcommand{\proofname}{Proof}

We underscore that the construction provided in the proof of Theorem~\ref{t:K_big_alpha_big} requires $K \geq 3$. In fact, the same argument fails for $K = 2$.
\section{Pinsker Inequality for Csisz\'ar \texorpdfstring{$f$}{f}-Divergences}
\label{sec:f-divergences}
In this section we recall key definitions and results on Csisz\'ar $f$-divergences (in particular, $\alpha$-Tsallis relative entropies), and we compare the classical generalized Pinsker inequality for $f$-divergences with our generalized Pinsker inequality for the Bregman divergence induced by the negative $\alpha$-Tsallis entropies.

\begin{definition}
We say that $f$ is an admissible generator if $f \colon [0,+\infty) \to (-\infty,+\infty]$ is a convex function such that $f$ is finite for every $x >0$, $f(1)=0$ and $f(0) = \lim_{x \to 0^+}f(x)$.
If $f$ is an admissible generator, and $\Pb$ and $\Q$ are probability measures both defined over some measurable space $(\Omega,\cF)$ and such that $\Pb\ll\Q$ (which means that $\Pb$ is absolutely continuous with respect to $\Q$, i.e., that $\Q[A] = 0$ implies $ \Pb[A]=0$, for every event $A \in \cF$), the Csisz\'ar $f$-divergence of $\Pb$ from $\Q$ is defined by
\[
    D_f^{\textrm{C}}(\Pb,\Q)
\coloneqq
    \int_\Omega f\lrb{\frac{\dif \Pb}{\dif \Q}}\dif \Q\;,
\]
where $\frac{\dif \Pb}{\dif \Q}$ is the Radon-Nikodym derivative of $\Pb$ with respect to $\Q$.
\end{definition}

For any admissible generator $f$ that is three times differentiable at $1$ with $f''(1)>0$, and for any probability measures $\Pb$ and $\Q$ on some measurable space $(\Omega,\cF)$ such that $\Pb\ll\Q$, it has been proven by \cite{Gilardoni} that a generalized Pinsker inequality holds (and is sharp) in the following form
\begin{equation}
\label{eq:Pinsker_Csiszar}
   D_f^{\textrm{C}}(\Pb,\Q)
\ge
    2f''(1) \cdot \TV{\Pb}{\Q}^2\;,
\end{equation}
where $\TV{\Pb}{\Q}$ is the total variation distance between $\Pb$ and $\Q$, formally defined as
\[
    \TV{\Pb}{\Q}
\coloneqq
    \sup_{A \in \cF} \babs{ \Pb[A] - \Q[A]}\;.
\]

The KL-divergence $\KL(\Pb\Vert \Q)$ between two probability measures $\Pb$ and $\Q$ on some measurable space $(\Omega,\cF)$ such that $\Pb\ll\Q$ is indeed a Csisz\'ar $f$-divergence for the admissible generator $f \colon [0,+\infty) \to (-\infty,+\infty]$ given by $x \mapsto f(x)\coloneq x \ln x$ (with the convention that $0 \cdot \ln0 = 0$), since, by a change of variables, we have
\[
    \KL(\Pb\Vert \Q)
\coloneqq
    \int_{\Omega} \ln \lrb{\frac{\dif \Pb}{\dif \Q}} \dif \Pb
=
    \int_{\Omega} \ln \lrb{\frac{\dif \Pb}{\dif \Q}} \frac{\dif \Pb}{\dif \Q} \dif \Q
=
    \int_{\Omega} f\lrb{\frac{\dif \Pb}{\dif \Q}} \dif \Q
=
D_f^{\textrm{C}}(\Pb,\Q)\;,
\]
and, consequently, for \Cref{eq:Pinsker_Csiszar}, we recover the (sharp) Pinsker inequality for probability measures
\[
    \KL(\Pb\Vert \Q)
=
    D_f^{\textrm{C}}(\Pb,\Q)
\ge
    2 f''(1) \cdot \TV{\Pb}{\Q}^2 
=
    2 \cdot \TV{\Pb}{\Q}^2\;.
\]
For $K \in \{2,3,\dots\}$, this last inequality implies the classic Pinsker inequality over the $(K-1)$-dimensional probability simplex $\simplex{K}$.
Specifically, if $p,q \in \simplex{K}$ are such that, for any $k \in [K]$, whenever $q_k = 0$ it also holds that $p_k = 0$, then $p$ and $q$ are the probability mass functions of two probability measures $\Pb$ and $\Q$ on the discrete space $\Omega \coloneqq [K]$ with $\Pb\ll\Q$, and
\begin{align*}
    \TV{\Pb}{\Q}
&=
    \sup_{A \in 2^K} \babs{\Pb[A]-\Q[A]}
=
    \sup_{A \in 2^K} \labs{\sum_{k\in A}p_k-\sum_{k\in A}q_k}
\\
&=
    \frac{1}{2} \cdot \sum_{k \in [K]}|p_k-q_k|
=
    \frac{1}{2}\cdot \lno{p-q}_1
=
    \TV{p}{q}\;,
\end{align*}
(the last equality is proven in \Cref{s:TVvsEll1}), which implies (with the convention $0/0 \coloneqq 0$ and $0 \cdot \ln(0) \coloneqq 0$)
\[
    \KL(p\Vert q)
\coloneqq
    \sum_{k \in [K]} \ln\lrb{\frac{p_k}{q_k}}p_k
=
    \int_{\Omega} \ln\lrb{\frac{\dif \Pb}{ \dif \Q}} \dif\Pb
=
    \KL(\Pb \Vert \Q)
\ge
    2 \cdot\TV{\Pb}{\Q}^2
=
    2 \cdot\TV{p}{q}^2\;.
\]

An interesting class of admissible generators are those of the form
\[
    f_\alpha \colon [0,+\infty) \to (-\infty,+\infty]\;,\qquad
    x\mapsto \frac{x^{\alpha}-1}{\alpha(\alpha-1)}\;,
\]
for $\alpha \in (0,1)\cup(1,+\infty)$, whose corresponding Csisz\'ar $f_\alpha$-divergences go under the name of $\alpha$-Tsallis relative entropies (up to perhaps different normalization constants), with \Cref{eq:Pinsker_Csiszar} that becomes
\[
    D^{\textrm{TRE}}_\alpha(\Pb \Vert \Q)
\coloneqq
     D^{\textrm{C}}_{f_\alpha}(\Pb , \Q)
\ge
    2 f_\alpha '' (1) \cdot \TV{\Pb}{\Q}^2
=
    2\cdot \TV{\Pb}{\Q}^2\;.
\]
Again, if we specialize this last inequality in the discrete case, for any $p,q \in \relint\lrb{\simplex{K}}$, if $\Pb$ and $\Q$ are the probability measures of which $p$ and $q$ are the respective probability mass functions, we find that
\begin{align}
\label{eq:pinsker_TRE}
    D^{\textrm{TRE}}_\alpha(p \Vert q)
&\coloneqq
    \sum_{k \in [K]} f_\alpha\lrb{\frac{p_k}{q_k}}q_k
=
    \int_\Omega f\lrb{\frac{\dif \Pb}{\dif \Q}}\dif \Q
=
    D^{\textrm{TRE}}_\alpha(\Pb \Vert \Q)
\\
&\ge
    2 \cdot\TV{\Pb}{\Q}^2
=
    2 \cdot \TV{p}{q}^2
=
    \frac{1}{2} \cdot \lno{p-q}_1^2\;.\nonumber
\end{align}
It is interesting to investigate the relation between $\alpha$-Tsallis relative entropies and Bregman divergences of negative $\alpha$-Tsallis entropies, their respective generalized Pinsker inequalities, and whether they can be derived from each other.
Here, we show that for $\alpha \in (-\infty,0)\cup (0,1)$, the generalized Pinsker inequality for $\alpha$-Tsallis relative entropies of \cite{Gilardoni} (Equation~\ref{eq:pinsker_TRE}) implies a \emph{looser} version of our Pinsker inequality for Bregman divergences of negative $\alpha$-Tsallis entropies (\Cref{thm:pinsker}), while the vice versa is true for $\alpha \in (1,+\infty)$.

Specifically, for any $\alpha \in (-\infty,0)\cup (0,1) \cup (1,+\infty)$ and any $p,q \in \relint(\simplex{K})$, we have
\begin{align*}
    D_\alpha(p \Vert q)
&=
    \frac{1}{\alpha} \cdot \sum_{k \in [K]} \frac{p_k^\alpha + (\alpha-1)q_k^\alpha-\alpha p_kq_k^{\alpha-1}}{\alpha-1}
\\
&=
    \sum_{k \in [K]} \frac{\lrb{\frac{p_k}{q_k}}^\alpha + (\alpha-1)-\alpha\frac{p_k}{q_k}}{\alpha(\alpha-1)} \cdot q_k^{\alpha}
\eqqcolon
    \sum_{k \in [K]} g_\alpha \lrb{\frac{p_k}{q_k}}\cdot q_k^{\alpha}
\eqqcolon
    (\square)\;,
\end{align*}
where
\[
    g_\alpha \colon [0,+\infty) \to \R\;, \qquad x \mapsto \frac{x^\alpha+ \alpha-1 -\alpha x }{\alpha(\alpha-1)}\;.
\]
Now, notice that $g_\alpha$ is a non-negative function (its minimum is at $1$ and its value is $0$) and hence
\begin{align*}
    (\square)
&\ge
    \sum_{k \in [K]} g_\alpha \lrb{\frac{p_k}{q_k}}\cdot q_k\;, \qquad \textit{if $\alpha \in (-\infty,0)\cup(0,1)$}\;,
\\
    (\square)
&\le
    \sum_{k \in [K]} g_\alpha \lrb{\frac{p_k}{q_k}}\cdot q_k\;, \qquad \textit{if $\alpha \in (1,\infty)$}\;.
\end{align*}
Now, we have
\begin{align*}  
    \sum_{k \in [K]} g_\alpha \lrb{\frac{p_k}{q_k}}\cdot q_k
&=
    \sum_{k \in [K]} \frac{\lrb{\frac{p_k}{q_k}}^\alpha + (\alpha-1)-\alpha\frac{p_k}{q_k}}{\alpha(\alpha-1)}\cdot q_k
\\
&=
    \sum_{k \in [K]} \frac{\lrb{\frac{p_k}{q_k}}^\alpha -1}{\alpha(\alpha-1)} \cdot q_k
=
    \sum_{k \in [K]} f_\alpha \lrb{\frac{p_k}{q_k}}\cdot q_k
=
    D^{\textrm{TRE}}_\alpha(p \Vert q) \;,
\end{align*}
and hence
\begin{align}
    \label{eq:bregman_vs_relative_a<1}
    D_\alpha(p \Vert q)
    &\ge
    D^{\textrm{TRE}}_\alpha(p \Vert q)\;, \qquad \textit{if $\alpha \in (-\infty,0)\cup (0,1)$\;,}
    \\
    \label{eq:bregman_vs_relative_a>1}
    D_\alpha(p \Vert q)
    &\le
    D^{\textrm{TRE}}_\alpha(p \Vert q)\;, \qquad \textit{if $\alpha \in (1,\infty)$\;.}
\end{align}
In particular, for $\alpha \in (-\infty,0)\cup(0,1)$, by combining \Cref{eq:bregman_vs_relative_a<1} and the generalized Pinsker inequality for $\alpha$-Tsallis relative entropies \Cref{eq:pinsker_TRE}, we get the following \emph{looser} form (by a $2^{1-\alpha}$ factor) of \Cref{thm:pinsker} for $\alpha \in (-\infty,0)\cup(0,1)$ and $p,q \in \relint(\simplex{K})$ 
\[
    D_\alpha(p\Vert q)
\ge
    D^{\textrm{TRE}}_\alpha(p \Vert q)
\ge
    \frac{1}{2} \cdot\lno{p-q}_1^2\;, \qquad \textrm{if $\alpha \in (0,1)$}\;.
\]
Since for $\alpha > 1$ and $p,q \in \relint(\simplex{K})$ the opposite inequality (Equation~\ref{eq:bregman_vs_relative_a>1}) holds, from \cref{thm:pinsker} we obtain the following \emph{looser} version of the generalized Pinsker inequality for $\alpha$-Tsallis relative entropies  
\[
    D^{\textrm{TRE}}_\alpha(p \Vert q)
\ge
    D_\alpha(p\Vert q)
\ge
    \frac{C_{\alpha,K}}{2} \cdot \lno{p-q}_1^2\;, \qquad \textrm{if $\alpha \in (1,+\infty)$}\;.
\]
The case $\alpha = 0$ is similar to the case $\alpha \in (-\infty,0)\cup (0,1)$, where the role of the $\alpha$-Tsallis relative entropy is played by the reverse KL, $(p,q) \mapsto \KL(q\Vert p)$.
Specifically, for any $p,q \in \relint(\simplex{K})$, we can obtain a \emph{looser} form (by a factor of $2$) of \Cref{thm:pinsker} as follows
\begin{align*}    
    D_0(p\Vert q)
&=
    \sum_{k \in [K]}\underbrace{\lrb{ \frac{p_k}{q_k}-\ln \frac{p_k}{q_k} -1}}_{\ge 0}
\ge
    \sum_{k \in [K]}\lrb{ \frac{p_k}{q_k}-\ln \frac{p_k}{q_k} -1}\cdot q_k
\\
&=
    \sum_{k \in [K]}\lrb{ -q_k\cdot\ln \frac{p_k}{q_k} }
=
    \sum_{k \in [K]}q_k\cdot\ln \frac{q_k}{p_k}
=
    \KL(q\Vert p)
\ge
    \frac{1}{2}\cdot\lno{p-q}_1^2\;,
\end{align*}
where the last inequality is the classic Pinsker inequality.
\section{Weaker Inequalities for \texorpdfstring{$\alpha > 2$ and $K \ge 3$}{alpha > 2 and K >= 3}}
\label{app:clipping}
As we have seen in Theorem~\ref{thm:pinsker} and in Appendix~\ref{app:nopinsker}, for $\alpha > 2$ and $K \geq 3$ no positive constant $C$ satisfies
$D_\alpha(p \Vert q) \ge C \cdot \lno{ p-q }_1^2$
for every 
$p$, $q \in \relint{\lrb{\simplex{K}}}$.
However, we have seen in \eqref{eq:change-of-variables} that
\begin{equation}\label{eq:useful}
    2\cdot \frac{D_\alpha\brb{p\Vert q}}{\lno{p - q}_1^2}
=
    \sum_{k \in [K]} v_k^2 \gamma_k^{\alpha -2}\;,
\end{equation}
where $\gamma$ is a convex combination of $p$ and $q$ and $v \coloneq \frac{p - q}{\lno{p - q}_1}$ is an element of $\tsphere{K}$.
In some applications (notably sequential prediction under logarithmic loss and related proper scoring rules),
it is standard to restrict predictions to a proper subset of the relative interior of the simplex or to apply truncation/smoothing so that
the loss and KL-type divergences remain finite and well-behaved
\citep{cesa-bianchi-lugosi2006plg,cesa-bianchi-lugosi2001logloss,gneiting-raftery2007jasa,vanommen2016robust}.
For instance, we may require that some positive $\e$ satisfying $0 < \e < \frac{1}{K}$ is such that $p_k \ge \e$  and $q_k \ge \e$ for every $k$, a condition which bind both $p$, $q$ to stay sufficiently away from the relative boundary of $\simplex{K}$.
This condition entails that also $\gamma_k \ge \e$ for every $k$, hence
$\sum_{k \in [K]} v_k^2 \gamma_k^{\alpha -2} \ge \e^{\alpha - 2} \lno{v}_2^2$.
Using that $\lno{v}_2^2 \ge C_{2, K}$ we derive
\begin{equation}\label{eq:relaxed-pinsk-both}
    D_\alpha\brb{p\Vert q}
\ge
    \frac{\e^{\alpha - 2}}{2}
\cdot
    C_{2, K}
\cdot
    {\lno{p - q}_1^2}\;.
\end{equation}
We may also derive a similar inequality in a less stringent setting, where only one among $p$ and $q$ is required to have its coordinates bounded below by $\e$. Indeed, the proof of Proposition~\ref{prop:second_order} shows that \eqref{eq:useful} is a consequence of the first-order Taylor expansion of $S_{\alpha}$ with Lagrange remainder. If instead we consider the integral remainder, which underlies the approach used, e.g., in \cite{reid2009surrogate,reid2011information} to estimate Bregman divergences, what we get is 
\begin{equation}\notag
    2\cdot \frac{D_\alpha\brb{p\Vert q}}{\lno{p - q}_1^2}
=
    \sum_{k \in [K]} v_k^2 \int_0^1 2(1-x) \brb{(1-x)q_k+xp_k}^{\alpha-2} \diff x\;,
\end{equation}
If the constraint is valid only for $p$, then for every $k$ we get
\begin{align*}    
    \int_0^1 2(1-x) \brb{(1-x)q_k+xp_k}^{\alpha-2} \diff x
&\ge
    \int_0^1 2(1-x) \brb{x\e}^{\alpha-2} \diff x
\\
&=
    \e^{\alpha - 2} \int_0^1 2(1-x)x^{\alpha-2} \diff x
=
    {\e}^{\alpha - 2} \cdot \frac{2}{\alpha (\alpha - 1)}\;.
\end{align*}
This results in the inequality
\begin{equation}\label{eq:relaxed-pinsk-p-constraint}
    D_\alpha\brb{p\Vert q}
\ge
    \frac{\e^{\alpha - 2}}{2}
\cdot
     \frac{2}{\alpha (\alpha - 1)}
\cdot
    C_{2, K}
\cdot
    {\lno{p - q}_1^2}\;.
\end{equation}
Similarly, if the constraint is valid only for $q$, then this time we have
\begin{align*}   
    \int_0^1 2(1-x) \brb{(1-x)q_k+xp_k}^{\alpha-2} \diff x\;
&\ge
    \int_0^1 2(1-x) \brb{(1-x)\e }^{\alpha-2} \diff x\;
\\
&=
    \e^{\alpha - 2} \int_0^1 2\brb{1-x}^{\alpha-1} \diff x\;
=
    \e^{\alpha - 2} \cdot \frac{2}{\alpha}\;,
\end{align*}    
hence
\begin{equation}\label{eq:relaxed-pinsk-q-constraint}
    D_\alpha\brb{p\Vert q}
\ge
    \frac{\e^{\alpha - 2}}{2}
\cdot
     \frac{2}{\alpha}
\cdot
    C_{2, K}
\cdot
    {\lno{p - q}_1^2}\;.
\end{equation}
We can rephrase what we have seen so far as follows: 
\begin{proposition}
    For $\alpha > 2$ and $K \geq 3$, consider the inequality
\begin{equation}\label{eq:relaxed-pinsk-generic}
    D_\alpha\brb{p\Vert q}
\ge
    \frac{C}{2}
\cdot
    {\lno{p - q}_1^2}\;.
\end{equation}
Then:
\begin{enumerate}
    \item For every $p$, $q \in \relint{\lrb{\simplex{K}}}$, the inequality \eqref{eq:relaxed-pinsk-generic} is satisfied by 
    \begin{equation}\notag
        C 
    =
        C(\alpha, K, p, q) 
    =  
        C_{2, K} 
    \cdot
        \lrb{\min_{k \in [K]}\lcb{p_k, q_k}}^{\alpha - 2}\;;
    \end{equation}
    \item For every $p$, $q \in \relint{\lrb{\simplex{K}}}$, the inequality \eqref{eq:relaxed-pinsk-generic} is satisfied by
    \begin{equation}\notag
        C 
    =
        C(\alpha, K, p) 
    =  
        C_{2, K} 
    \cdot
        \frac{2}{\alpha (\alpha - 1)}
    \cdot
        \lrb{\min_{k \in [K]}\lcb{p_k}}^{\alpha - 2}\;;
    \end{equation}
    \item For every $p$, $q \in \relint{\lrb{\simplex{K}}}$, the inequality \eqref{eq:relaxed-pinsk-generic} is satisfied by
    \begin{equation}\notag
        C 
    =
        C(\alpha, K, q) 
    =  
        C_{2, K} 
    \cdot
        \frac{2}{\alpha}
    \cdot
        \lrb{\min_{k \in [K]}\lcb{q_k} }^{\alpha - 2}\;.
    \end{equation}
\end{enumerate}
\end{proposition}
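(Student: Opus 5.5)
The plan is to start from the identity \eqref{eq:useful}: for distinct $p,q\in\relint\lrb{\simplex{K}}$,
\[
    2\cdot \frac{D_\alpha\brb{p\Vert q}}{\lno{p - q}_1^2}
=
    \sum_{k \in [K]} v_k^2 \gamma_k^{\alpha -2}\;.
\]
Here $v = (p-q)/\lno{p-q}_1 \in \tsphere{K}$. The vector $\gamma$ comes in one of two forms. For item 1 it is a convex combination of $p$ and $q$, from the Lagrange remainder in the proof of Proposition~\ref{prop:second_order}. For items 2 and 3 I use the integral-remainder form, where the factor $\gamma_k^{\alpha-2}$ is replaced by
\[
    \int_0^1 2(1-x)\brb{(1-x)q_k+xp_k}^{\alpha-2}\diff x\;.
\]
In every case the strategy has two parts. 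First, bound the $k$-dependent weight from below by a constant $c$ times $\e^{\alpha-2}$, where $\e$ is the relevant coordinatewise minimum. Then use $\sum_k v_k^2 = \lno{v}_2^2 \ge C_{2,K}$ for $v \in \tsphere{K}$. This last bound holds because Proposition~\ref{prop:second_order} with $\alpha=2$ gives $C_{2,K}=\inf_{v\in\tsphere{K}}\lno{v}_2^2$, since $\gamma_k^{0}=1$.

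For item 1, set $\e=\min_k\{p_k,q_k\}$. Each $\gamma_k$ is a convex combination of $p_k$ and $q_k$, so $\gamma_k\ge\e$. Since $\alpha-2>0$, the map $t\mapsto t^{\alpha-2}$ is increasing, which gives $\gamma_k^{\alpha-2}\ge \e^{\alpha-2}$. Summing yields $2D_\alpha/\lno{p-q}_1^2\ge \e^{\alpha-2}C_{2,K}$, which is exactly \eqref{eq:relaxed-pinsk-generic} with the stated constant.

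For items 2 and 3, I would first justify the integral remainder. Write $\phi(x)=-S_\alpha(q+x(p-q))$. Then $D_\alpha(p\Vert q)=\phi(1)-\phi(0)-\phi'(0)=\int_0^1(1-x)\phi''(x)\diff x$, and by Remark~\ref{remark:hessian},
\[
    \phi''(x)=\sum_k (p_k-q_k)^2\brb{(1-x)q_k+xp_k}^{\alpha-2}\;.
\]
For item 2, set $\e=\min_k p_k$ and drop the nonnegative $q$-term: $(1-x)q_k+xp_k\ge x\e$. The resulting Beta integral is $\int_0^1 2(1-x)x^{\alpha-2}\diff x = 2/\brb{\alpha(\alpha-1)}$. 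For item 3, set $\e=\min_k q_k$ and use $(1-x)q_k+xp_k\ge(1-x)\e$. The weight integral becomes $\int_0^1 2(1-x)^{\alpha-1}\diff x=2/\alpha$. Combining either bound with $\lno{v}_2^2\ge C_{2,K}$ gives the two remaining constants.

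There is no real obstacle here; the argument just collects the computations already made. The only points needing care are the following. The integral remainder must be valid on the whole segment, which holds because the segment lies in $(0,+\infty)^K$ and $-S_\alpha$ is $C^\infty$ there. Monotonicity of $t^{\alpha-2}$ relies on $\alpha>2$. The Beta integral in item 2 converges because $\alpha-2>-1$. Finally, the bound $\lno{v}_2^2\ge C_{2,K}$ should be cited from Proposition~\ref{prop:second_order} at $\alpha=2$ rather than from the $K$-parity formula.
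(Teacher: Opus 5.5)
Your proposal is correct and follows the paper's argument essentially step for step. For item 1 you use the Lagrange-remainder identity with $\gamma_k \ge \min_k\{p_k,q_k\}$; for items 2 and 3 you use the integral-remainder form with the lower bounds $x\e$ and $(1-x)\e$, giving the factors $\frac{2}{\alpha(\alpha-1)}$ and $\frac{2}{\alpha}$; and you finish with $\lno{v}_2^2 \ge C_{2,K}$, which you justify via Proposition~\ref{prop:second_order} at $\alpha=2$ (a step the paper leaves implicit), as you do the derivation of the integral remainder.
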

\end{document}